\definecolor{my_link}{rgb}{0.64,0.16,0.16}
\definecolor{vertsombre}{rgb}{0.00,0.57,0.1}
\definecolor{marron}{rgb}{0.64,0.16,0.16}
\definecolor{orange_js}{RGB}{230,159,0}
\crefname{supp}{Supplement}{Supplements}
\def\*#1{\mathbf{#1}}
\title{Spatially relaxed inference\\ on high-dimensional linear models}
\author{Jérôme-Alexis CHEVALIER \\
  Inria Paris-Saclay, CEA, Universit\'e Paris-Saclay\\
  and \\
  Tuan-Binh NGUYEN \\
  Inria Paris-Saclay, CEA, Universit\'e Paris-Saclay, LMO\\
  and \\
  Bertrand THIRION \\
  Inria Paris-Saclay, CEA, Universit\'e Paris-Saclay\\
  and \\
  Joseph SALMON \\
  IMAG, Université de Montpellier, CNRS\\
\\\href{mailto:jerome-alexis.chevalier@inria.fr}{jerome-alexis.chevalier@inria.fr}}
\date{\today}
\begin{document}
\newcounter{align}[equation] 

\maketitle

\begin{abstract}
We consider the inference problem for high-dimensional linear models,
when covariates have an underlying spatial organization reflected
in their correlation.
A typical example of such a setting is high-resolution imaging, in
which neighboring pixels are usually very similar.
Accurate point and confidence intervals estimation is not possible in
this context with many more covariates than samples, furthermore with
high correlation between  covariates.
This calls for a reformulation of the statistical inference problem,
that takes into account the underlying spatial structure: if covariates
are locally correlated, it is acceptable to detect them up to a given
spatial uncertainty.
We thus propose to rely on the $\delta$-FWER, that is the probability
of making a false discovery at a distance greater than $\delta$
from any true positive.
With this target measure in mind, we study the properties of
ensembled clustered inference algorithms which combine three
techniques: spatially constrained clustering, statistical inference,
and ensembling to aggregate several clustered inference solutions.
We show that ensembled clustered inference algorithms
control the $\delta$-FWER under standard assumptions
for $\delta$ equal to the largest cluster diameter.
We complement the theoretical analysis with empirical results,
demonstrating accurate $\delta$-FWER control and decent power achieved
by such inference algorithms.
\end{abstract}
Keywords:
Clustering; High-dimension; Linear model; Spatial tolerance;
Statistical inference; Structured data; Support recovery.
%
\section{Introduction}
\label{sec:introduction_theory}
%
\paragraph{High-dimensional setting.}
High-dimensional regression corresponds to a setting where the
number of covariates (or features) $p$ exceeds the number of samples $n$.
It notably occurs when searching for conditional associations
among some high-dimensional observations and some outcome of interest:
the \emph{target}.
Typical examples of the high-dimensional setting include inference problems
on high-resolution images, where one aims at pixel- or voxel-level analysis,
\eg in neuroimaging \citep{norman2006, demartino2008}, astronomy
\citep{richards2009exploiting}, but also in other fields where
covariates display a spatial structure \eg in genomics
\citep{balding2006tutorial, dehman2015performance}.
In all these examples, it actually turns out that not only
$n < p$ but even $n\ll p$ and the covariates are spatially
structured because of the physics of the problem or the
measurements process.
Because such high-dimensional data lead to high-variance results,
probing statistical significance is important to give a level of
confidence in the reported association.
For this reason, the present analysis departs from traditional sparse
modeling methods such as the Lasso \citep{tibshirani1996}, that simply
aim at selecting a good set of predictive covariates without
considering statistical significance.
In this context, a first approach is to consider the multivariate linear model:
\begin{align*}
\*y = \*X\bm\beta^* + \bm\varepsilon \enspace ,
\end{align*}
where the target is denoted by $\*y \in \bbR^{n}$,
the design matrix by $\*X \in \bbR^{n \times p}$,
the parameter vector by $\bm\beta^* \in \bbR^{p}$ and
the random error vector by $\bm\varepsilon \in \bbR^{n}$.
The aim is to infer $\bm\beta^*$, with statistical guarantees on the
estimate, in particular regarding the support, \ie the set of covariates
with non-zero importance.

\paragraph{Statistical inference on individual parameters.}
In high-dimensional settings, standard statistical inference
methodology does not apply, but numerous methods have recently
been proposed to recover the non-zero parameters of $\bm\beta^*$
with statistical guarantees.
Many methods rely on resampling:
bootstrap procedures \citep{bach2008bolasso, chatterjee2011, liu2013asymptotic},
perturbation resampling-based procedures \citep{minnier2011perturbation},
stability selection procedures \citep{Meinshausen2010} and
randomized sample splitting \citep{wasserman2009, Meinshausen2008}.
All of these approaches suffer from limited power.
Contrarily to the screening/inference procedure, post-selection
inference procedures generally merge the screening and inference steps
into one and then use all the samples \citep{berk2013,
lockhart2014significance, lee2016, tibshirani2016exact}, resulting
in potentially more powerful tests than sample splitting.
Yet, these approaches do not scale well with large $p$.
Another family of methods rely on debiasing procedures:
the most prominent examples are corrected ridge \citep{buhlmann2013}
and desparsified Lasso \citep{Zhang_Zhang14,
vandeGeer_Buhlmann_Ritov_Dezeure14, Javanmard_Montanari14}
which is an active area of research
\citep{javanmard2018,bellec2019biasing, celentano2020lasso}.
Additionally, knockoff filters \citep{barber2015,candes2018}
consist in creating  noisy ``fake'' copies of the original variables, and
checking which original variables are selected prior to the fake ones.
Finally, a general framework for statistical inference in
sparse high-dimensional models has been proposed recently
\citep{ning2017general}.

\paragraph{Failure of existing statistical inference methods.}
In practice, in the $n \ll p$ setting we consider,
the previous methods are not well adapted as they are often
powerless or computationally intractable.
In particular, the number of predictive parameters (\ie the support size)
denoted $s(\bm\beta^*)$ can be greater than the number of
samples even in the sparse setting, where $s(\bm\beta^*) \ll p$.
There is an underlying identifiability problem: in general, one cannot
retrieve all predictive parameters, as highlighted \eg
in \citet{wainwright2009}.
Beyond the fact that statistical inference is impossible when $p \gg n$,
the problem is aggravated by the following three effects.
First, as outlined above, dense covariate sampling leads to high values
for $p$ and induces high correlation among covariates, further challenging
the conditions for recovery, as shown in \cite{wainwright2009}.
Second, when testing for several multiple hypothesis, the correction
cost is heavy \citep{dunn1961, westfall1993, Benjamini1995};
for example with Bonferroni correction \citep{dunn1961},
p-values are corrected by a factor $p$ when testing every covariate.
This make this type of inference methods powerless in our settings
(see \Cref{fig:estimated_weights} for instance).
Third, the above approaches are at least quadratic or cubic in the
support size, hence become prohibitive whenever both $p$ and
$n$ are large.

\paragraph{Combining clustering and inference.}
Nevertheless, in these settings, variables often reflect
some underlying spatial structure, such as smoothness.
For example, in medical imaging, an image has a $3$D structure and
a given voxel is highly correlated with neighboring voxels; in genomics,
there exist blocks of Single Nucleotide Polymorphisms (SNPs) that tend to
be jointly predictive or not.
Hence, $\bm\beta^*$ can in general be assumed to share the same
structure: among several highly correlated covariates, asserting that
only one is important to predict the target seems meaningless, if not
misleading.

A computationally attractive solution that alleviates high dimensionality
 is to group correlated neighboring covariates.
This step can be understood as a design compression: it produces a
closely related, yet reduced version of the original problem (see \eg
\citet{park2006, Varoquaux2012, hoyos2018recursive}).
Inference combined with a fixed clustering has been proposed by
\citet{Buhlmann2012} and can overcome the dimensionality issue, yet this
study does not provide procedures that derive cluster-wise confidence
intervals or p-values.
Moreover, in most cases groups (or clusters) are not pre-determined
nor easily identifiable from data, and their estimation simply
represents a local optimum among a huge, non-convex space of solutions.
It is thus problematic to base inference upon such an arbitrary
data representation.
Inspired by this dimension reduction approach,
we have proposed \citep{Chevalier2018a} the
ensemble of clustered desparsified Lasso (EnCluDL) procedure
that exhibits strong empirical performances \citep{chevalier2021decoding}
in terms of support recovery even when $p \gg n$.
EnCluDL is an ensembled clustered inference algorithm, \ie it
combines a spatially constrained clustering procedure
that reduces the problem dimension, an inference
procedure that performs statistical inference at the cluster level,
and an ensembling method that aggregates several cluster-level solutions.
Concerning the inference step, the desparsified Lasso
\citep{Zhang_Zhang14,vandeGeer_Buhlmann_Ritov_Dezeure14,Javanmard_Montanari14}
was preferred over other high-dimensional statistical inference procedures
based on the comparative study of \citet{Dezeure2015}
and on the research activity around it
\citep{dezeure2017,javanmard2018,bellec2019biasing,celentano2020lasso};
however, it is be possible to use another inference procedure that produces
a p-value family controlling the classical FWER.
By contrast, we did not consider the popular knockoff procedure
\citep{barber2015, candes2018}, that does not produce p-values
and does not control the family-wise error rate (FWER).
However, an extension of the knockoffs to FWER-type control was proposed
by \citet{janson2016}.
It does not control the standard FWER but another relaxed version of the FWER
called $k$-FWER.
As it is a relevant alternative to ensembled clustered inference algorithms,
we have included it in our empirical comparison
(see \Cref{sec:simulations_ecdl_proof}).
In \citet{nguyen2020}, a variant of the knockoffs is proposed
to control the FWER, but it does not handle large-p problems.
Another extension that produces p-value, called conditional randomization test,
has been presented in \citet{candes2018}, but its computational cost is prohibitive.
Additionally, \citet{meinshausen2015group} provides ``group-bound''
confidence intervals, corresponding to confidence intervals on the
$\ell_1$-norm of several parameters, without making further assumptions
on the design matrix.
However, this method is known to be conservative in practice
\citep{Mitra_Zhang16,javanmard2018}.
Finally, hierarchical testing \citep{Mandozzi:2016,
blanchard2005hierarchical,meinshausen2008hierarchical}
also leverages this clustering/inference combination but in a different way.
Their approach consists in performing significance tests along the
tree of a hierarchical clustering algorithm starting from the root
node and descending subsequently into children of rejected nodes.
This procedure has the drawback of being constrained by the clustering tree, which is often not available, thus replaced by some noisy estimate.

\paragraph{Contributions.}
Producing a cluster-wise inference is not completely satisfactory
as it relies on an arbitrary clustering choice.
Instead, we look for methods that derive covariate-wise statistics
enabling support identification with a spatially relaxed false detection
control.
In that regard, our first contribution is to present a generalization
of the FWER called $\delta$-FWER, that takes into account a spatial
tolerance of magnitude $\delta$ for the false discoveries.
Then, our main contribution is to prove that
ensembled clustered inference algorithms control
the $\delta$-FWER under reasonable assumptions for
a given tolerance parameter $\delta$.
Finally, we apply the ensembled clustered inference
scheme to the desparsified Lasso leading to the
EnCluDL algorithm and conduct an empirical study:
we show that EnCluDL exhibits a good statistical power
in comparison with alternative procedures and we verify that
it displays the expected $\delta$-FWER control.

\paragraph{Notation.}
Throughout the remainder of this article,
for any $p \in \bbN^*$, we write $ [p]$ for the set $\{1, \ldots, p\}$.
For a vector $\bm\beta$, $\bm\beta_{j}$ refers to its $j$-th coordinate.
For a matrix $\*X$, $\*X_{i,.}$ refers to the $i$-th row and
$\*X_{.,j}$ to the $j$-th column and $\*X_{i,j}$ refers to the
element in the $i$-th row and $j$-th column.

\section{Model and data assumptions}
\label{sec:context}

\subsection{Generative models of high-dimensional data: random fields}
\label{sec:design_matrix}
%
In the setting that we consider, we assume that the covariates come
with a natural representation in a discretized metric space,
generally the discretized $2$D or $3$D Euclidean space.
In such settings, discrete random fields are convenient
to model the random variables representing the covariates.
Indeed, denoting by $\*X = (\*X_{i,j})_{i \in [n],j \in [p]}$ the random
design matrix, where $n$ is the number of samples and $p$ the number of
covariates, the rows $(\*X_{i, .})_{i \in [n]}$ are sampled from
a random field defined on a discrete domain.

\subsection{Gaussian random design model and high dimensional settings}
\label{sec:linear_model}
%
We assume that the covariates are independent
and identically distributed and follow a centered
Gaussian distribution, \ie for all $i \in [n]$,
$\*X_{i,.} \sim \mathcal{N}(0_{p},\bm\Sigma)$ where $\bm\Sigma$
is the covariance matrix of the covariates.
Our aim is to derive confidence bounds or p-values on the coefficients
of the parameter vector denoted by $\bm\beta^*$, under the
Gaussian linear model:
\begin{align}\label{eq:noise_model_theory}
\*y = \*X\bm\beta^* + \bm\varepsilon \enspace ,
\end{align}
where
$\*y \in \bbR^{n}$ is the target, $\*X \in \bbR^{n \times p}$ is the
(random) design matrix, $\bm\beta^* \in \bbR^{p}$ is the vector or parameters,
and $\bm\varepsilon \sim \mathcal{N}(0,\sigma_{\varepsilon}^2 \*I_n)$ is the
noise vector with standard deviation $\sigma_{\varepsilon}>0$.
We make the assumption that $\bm\varepsilon$ is independent of $\*X$.
%
\subsection{Data structure}
\label{sec:data_structure}
%
Since the covariates have a natural representation in a metric space,
we assume that the spatial distances between covariates are known.
With a slight abuse of notation, the distance between covariates
$j$ and $k$ is denoted by $d(j,k)$ for $(j,k) \in [p] \times [p]$
and the correlation between covariates $j$ and $k$
is given by $\Cor(\*X_{.,j}, \*X_{.,k}) =
\bm\Sigma_{j, k} / \sqrt{\bm\Sigma_{j, j} \bm\Sigma_{k, k}}$.
We now introduce a key structural assumption:
two covariates at a spatial distance smaller
than $\delta$ are positively correlated.
\begin{ass}\label{ass:assumption_1}
The covariates verify the \emph{spatial homogeneity
assumption} with distance parameter $\delta > 0$ if, for all
$(j,k) \in [p] \times [p]$, $d(j, k) \leq \delta$
implies that $\bm\Sigma_{j, k} \geq 0$.
\end{ass}
Under model \Cref{eq:noise_model_theory}, each coordinate of the
parameter vector $\bm\beta^*$ links one covariate to the target.
Then, $\bm\beta^*$ has the same underlying organization as the covariates
and is also called weight map in these settings.
Defining its \emph{support} as
$S(\bm\beta^*) = \discsetin{j \in [p]:\bm\beta^*_j \neq 0}$
and its cardinal as $s(\bm\beta^*) = |S(\bm\beta^*)|$,
we assume that the true model is sparse, meaning that $\bm\beta^*$
has a small number of non-zero entries, \ie $s(\bm\beta^*) \ll p$.
The complementary of $S(\bm\beta^*)$ in $[p]$ is
called the \emph{null region} and is denoted by $N(\bm\beta^*)$,
\ie $N(\bm\beta^*)  = \discsetin{j \in [p]:\bm\beta^*_j = 0}$.
Additionally to the sparse assumption, we assume that $\bm\beta^*$
is (spatially) smooth.
To reflect sparsity and smoothness, we introduce another key assumption:
weights associated with close enough covariates share the same sign,
zero being both positive and negative.
\begin{ass}\label{ass:assumption_2}
The weight vector $\bm\beta^*$ verifies the \emph{sparse-smooth assumption}
with distance parameter $\delta > 0$ if, for all
$(j,k) \in [p] \times [p]$, $d(j, k) \leq \delta$
implies that $\sign(\bm\beta_j^*) = \sign(\bm\beta_k^*)$.
\end{ass}
Equivalently, the sparse-smooth assumption with parameter $\delta$
holds if the distance between the two closest weights of opposite
sign is larger than $\delta$.
In \Cref{fig:full_fig}-(a), we give an example of a weight map verifying
the sparse-smooth assumption with $\delta = 2$.
%
\section{Statistical control with spatial tolerance}
\label{sec:definitions}
%
Under the spatial assumption we have discussed, discoveries that are closer than $\delta$ from the true support are not considered as false discoveries:  inference at a resolution finer than $\delta$ might be unrealistic.
This means that $\delta$ can be interpreted as a tolerance parameter on the (spatial) support we aim at recovering.
Then, we introduce a new metric closely related to the
FWER that takes into account spatial tolerance and we
call it $\delta$-family wise error rate ($\delta$-FWER).
A similar extension of the false discovery rate (FDR) has been
introduced by \citet{cheng2020multiple, nguyen2019, gimenez2019}, but,
to the best of our knowledge, this has not been considered yet for the FWER.
In the following, we consider a general estimator
$\hat{\bm\beta}$ that comes with p-values, testing the nullity
of the corresponding parameters, denoted by
$\hat{p} = (\hat{p}_j)_{j \in [p]}$.
Also, we denote by $S(\hat{\bm\beta}) \subset [p]$ a general
estimate of the support $S(\bm\beta^*)$ derived from
the estimator $\hat{\bm\beta}$.
\begin{df}[$\delta$-null hypothesis]\label{df:delta_null_hypothesis}
For all $j \in [p]$, the $\delta$-null hypothesis for the $j$-th covariates,
$H^{\delta}_0(j)$, states that all other covariates at distance
less than $\delta$ have a zero weight in the true model \Cref{eq:noise_model_theory};
the alternative hypothesis is denoted $H^{\delta}_1(j)$:
\begin{align*}
\begin{split}
& H^{\delta}_0(j) : \text{``for all}~ k \in [p] ~ \text{such that}~ d(j, k) \leq \delta,
~ \bm\beta^*_k = 0\text{''} \enspace , \\
& H^{\delta}_1(j) : \text{``there exists}~ k \in [p] ~ \text{such that}~ d(j, k) \leq \delta ~
\text{and }\, \bm\beta^*_k \neq 0\text{''} \enspace . \\
\end{split}
\end{align*}
\end{df}
Thus, we say that a $\delta$-type 1 error is made if
a null covariate $j \in [p]$ is selected, \ie $j \in S(\hat{\bm\beta})$,
while $H^{\delta}_0(j)$ holds true.
Taking $\delta=0$ recovers the usual
null-hypothesis $H_0(j) :  \text{``}\bm\beta^*_j = 0$'' and
usual type 1 error.

\begin{df}[Control of the $\delta$-type 1 error] \label{df:control_delta_type_one_error}
The p-value related to the $j$-th covariate denoted by $\hat{p}_j$
controls the $\delta$-type 1 error if, under $H^{\delta}_0(j)$,
for all $\alpha \in (0,1)$, we have:
\begin{align*}
\bbP(\hat{p}_j \leq \alpha) \leq \alpha \enspace ,
\end{align*}
where $\bbP$ is the probability distribution with respect to
the random dataset of observations $(\*X, \*y)$.
\end{df}
\begin{df}[$\delta$-null region]\label{df:delta_null_region}
The set of indexes of covariates verifying the $\delta$-null
hypothesis is called the $\delta$-null region and is denoted by
$N^{\delta}(\bm\beta^*)$ (or simply $N^{\delta}$):
\begin{align*}
N^{\delta}(\bm\beta^*) = \left\{j \in [p] : \text{for all}~ k \in [p],~
d(j,k) \leq \delta ~ \text{implies that}~ \bm\beta^*_k = 0 \right\} \enspace .
\end{align*}
\end{df}
When $\delta = 0$ the $\delta$-null region is simply
the null region : $N^{0}(\bm\beta^*)=N(\bm\beta^*)$.
We also point out the nested property of $\delta$-null regions with respect to $\delta$:
for $0 \leq \delta_1 \leq \delta_2$ we have $N^{\delta_2}(\bm\beta^*)
\subseteq N^{\delta_1}(\bm\beta^*) \subseteq N(\bm\beta^*)$ (see \Cref{fig:full_fig}-(d) for an example of $\delta$-null region).

\begin{df}[Rejection region]\label{df:rejection_region}
Given a family of p-values $\hat{p} = (\hat{p}_j)_{j \in [p]}$ and
a threshold $\alpha \in (0,1)$, the rejection region, $R_{\alpha}(\hat{p})$, is the set of indexes having a p-value lower than $\alpha$:
\begin{align*}
R_{\alpha}(\hat{p}) = \left\{j \in [p] : \hat{p}_j \leq \alpha \right\}
\enspace .
\end{align*}
\end{df}
\begin{df}[$\delta$-type 1 error region]\label{df:delta_type_one_error_region}
Given a family of p-values $\hat{p} = (\hat{p}_j)_{j \in [p]}$ and a
threshold $\alpha \in (0,1)$, the $\delta$-type 1 error region at level
$\alpha$ is $\mathscr{E}_{\alpha}^{\delta}$, the set of indexes belonging both to the $\delta$-null region and to the rejection region at level $\alpha$.
We also refer to this region as the erroneous rejection region at
level $\alpha$ with tolerance $\delta$:
\begin{align*}
\mathscr{E}^{\delta}_{\alpha}(\hat{p}) = N^{\delta} \cap R_{\alpha}(\hat{p})
\enspace .
\end{align*}
\end{df}
When $\delta = 0$ the $\delta$-type 1 error region recovers the type
1 error region which is denoted by $\mathscr{E}_{\alpha}(\hat{p})$.
Again, one can verify a nested property: for
$0 \leq \delta_1 \leq \delta_2$ we have
$\mathscr{E}_{\alpha}^{\delta_2}(\hat{p}) \subseteq
\mathscr{E}_{\alpha}^{\delta_1}(\hat{p}) \subseteq
\mathscr{E}_{\alpha}(\hat{p})$.
\begin{df}[$\delta$-family wise error rate]\label{df:delta_fwer}
Given a family of p-values $\hat{p} = (\hat{p}_j)_{j \in [p]}$ and a
threshold $\alpha \in (0,1)$, the $\delta$-FWER at level $\alpha$
with respect to the family $\hat{p}$, denoted $\delta\mbox{-FWER}_{\alpha}(\hat{p})$,
is the probability that the $\delta$-type 1 error region at level $\alpha$
is not empty:
\begin{align*}
\delta\mbox{-FWER}_{\alpha}(\hat{p})
= \bbP(|\mathscr{E}^{\delta}_{\alpha}(\hat{p})| \geq 1)
= \bbP(\min_{j \in N^{\delta}}\hat{p}_j \leq \alpha )
\enspace .
\end{align*}
\end{df}
\begin{df}[$\delta$-FWER control]\label{df:control_delta_fwer}
We say that the family of p-values $\hat{p} = (\hat{p}_j)_{j \in [p]}$
controls the $\delta$-FWER if, for all $\alpha \in (0,1)$:
\begin{align*}
\delta\mbox{-FWER}_{\alpha}(\hat{p}) \leq \alpha \enspace .
\end{align*}
\end{df}
When $\delta = 0$ the $\delta$-FWER is the usual FWER.
Additionally, for $0 \leq \delta_1 \leq \delta_2$, one can verify that
$\delta_2\mbox{-FWER}_{\alpha}(\hat{p}) \leq
\delta_1\mbox{-FWER}_{\alpha}(\hat{p}) \leq
\mbox{FWER}_{\alpha}(\hat{p})$.
Thus, $\delta$-FWER control is a weaker property than usual FWER control.
%
\section{$\delta$-FWER control with clustered inference algorithms}
\label{sec:cdl_properties}

\subsection{Clustered inference algorithms}
\label{sec:encludl_presentation}
%
A clustered inference algorithm consists in partitioning the
covariates into groups (or clusters) before applying a
statistical inference procedure.
In \Cref{alg:CluDL_new}, we describe a standard clustered inference
algorithm that produces a (corrected) p-value family on the
parameters of the model \Cref{eq:noise_model_theory}.
In this algorithm, in addition to the observations ($\*X, \*y$),
we take as input the transformation matrix $\*A \in \bbR^{p \times C}$
which maps and averages covariates into $C$ clusters.
%
%
The \texttt{statistical\_inference} function corresponds to a given
statistical inference procedure that takes as inputs the clustered data $\*Z$
and the target $\*y$ and produces valid p-values for every cluster.
If $C < n$, least squares are suitable, otherwise, procedures such
as multi-sample split \citep{wasserman2009, Meinshausen2008},
corrected ridge \citep{buhlmann2013} or desparsified Lasso
\citep{Zhang_Zhang14,vandeGeer_Buhlmann_Ritov_Dezeure14,Javanmard_Montanari14}
might be relevant whenever their assumptions are verified.
Then, the computed p-values are corrected for multiple testing
by multiplying by a factor $C$.
Finally, covariate-wise p-values are inherited from the corresponding cluster-wise p-values.

{\fontsize{4}{4}\selectfont
\begin{algorithm}\label{alg:CluDL_new}
\SetKwInOut{Input}{input}
\SetKwInOut{Output}{output}
\SetKwInOut{Parameter}{param}
\caption{Clustered inference}

\Input{$\*X \in \bbR^{n \times p}, \*y \in \bbR^{n}, \*A \in \bbR^{p \times C}$}

\vspace {1mm}

$\*Z = \*X \*A$
\tcp*{compressed design matrix}
$\hat{p}^{\mathcal{G}} = \texttt{statistical\_inference}(\*Z, \*y)$
\tcp*{uncorrected cluster-wise p-values}
$\hat{q}^{\mathcal{G}} = C \times \hat{p}^{\mathcal{G}}$
\tcp*{corrected cluster-wise p-values}
\vspace {1mm}
\For{$j = 1, \dots, p $}
{
  $\hat{q}_j = \hat{q}^{\mathcal{G}}_c$ if $j$ in cluster $c$
  \tcp*{corrected covariate-wise p-values}
}
\vspace {1mm}
\Return $\hat{q} = (\hat{q}_j)_{j \in [p]}$
\tcp*{family of corrected covariate-wise p-values}

\end{algorithm}
}

{\fontsize{4}{4}\selectfont
\begin{algorithm}\label{alg:EnCluDL_new}
\SetKwInOut{Input}{input}
\SetKwInOut{Output}{output}
\SetKwInOut{Parameter}{param}
\caption{Ensembled clustered inference}

\Input{$\*X \in \bbR^{n \times p}, \*y \in \bbR^{n}$}

\vspace {1mm}

\Parameter{$C, B$}

\vspace {1mm}

\For{$b = 1, \dots, B$}
{\vspace {1mm}
  $ \*X^{(b)} = \texttt{sampling}(\*X)$
  \tcp*{sampling rows of $\*X$}
  $ \*A^{(b)} = \texttt{clustering}(q, \*X^{(b)})$
  \tcp*{transformation matrix}
  $ \hat{q}^{(b)} = \texttt{clustered\_inference}(\*X, \*y, \*A^{(b)})$
  \tcp*{families of corr.~covariate-wise p-val.}
}
\vspace {1mm}
\For{$j = 1, \dots, p $}
{
  $\hat{q}_j = \texttt{ensembling}(\{\hat{q}_j^{(b)}, b \in [B]\})$
  \tcp*{aggregated corrected covariate-wise p-values}
}
\vspace {1mm}
\Return $\hat{q} = (\hat{q}_j)_{j \in [p]}$
\tcp*{family of aggregated corrected covariate-wise p-values}
\end{algorithm}
}

Ensembled clustered inference algorithms correspond to
the ensembling of several clustered inference solutions
for different choice of clusterings using the p-value
aggregation proposed by \citet{Meinshausen2008}.
In \Cref{alg:EnCluDL_new}, we give a standard ensembled clustered
inference algorithm that produces a (corrected) p-value family on the
parameters of the model \Cref{eq:noise_model_theory}.
In this algorithm, the \texttt{sampling} function  corresponds to
a subsampling of the data, \ie a subsampling of the rows of $\*X$.
The \texttt{clustering} function derives a choice of clustering in $C$
clusters, it produces a transformation matrix $\*A^{(b)} \in \bbR^{p \times C}$
that should vary for each bootstrap $b \in [B]$ since the subsampled data
$\*X^{(b)}$ varies.
Once the clustering inference steps are completed,
the \texttt{ensembling} function aggregates the $B$
(corrected) p-value families into a single one.

\Cref{fig:organization} can help the reader to better understand
the organization of the next sections, aiming eventually at establishing
the $\delta$-FWER control property of the clustered inference and
ensembled clustered inference algorithms.

\begin{figure}[!ht]
    \centering
    \includegraphics[width=1.0\linewidth]{organisation.pdf}
    \caption{Organization of \Cref{sec:cdl_properties}.}
    \label{fig:organization}
\end{figure}

\subsection{Compressed representation}
\label{sec:compressed_representation}
%
The motivation for using groups of covariates that are spatially
concentrated is to reduce the dimension while preserving
large-scale data structure.
The number of groups is denoted by $C < p$ and,
for $r \in [q]$, we denote by $G_r$ the $r$-th group.
The collection of all the groups is denoted by
$\mathcal{G} = \{G_1, G_2, \ldots, G_C\}$ and
forms a partition of $[p]$.
Every group representative variable is defined by the average of the
covariates it contains.
Then, denoting by $\*Z \in \bbR^{n \times C}$ the compressed random
design matrix that contains the group representative variables in
columns and, without loss of generality, assuming a suitable
ordering of the columns of $\*X$,
dimension reduction can be written:
\begin{align}\label{eq:clustering_theory}
\*Z = \*X\*A \enspace ,
\end{align}
where $\*A \in \bbR^{p \times q}$ is the transformation matrix
defined by:
\begin{align*}
\*A = \left[
\begin{matrix}
\alpha_1 \horzbar \alpha_1 & 0 \horzbar 0 &\ldots & 0 \horzbar 0\\
0 \horzbar 0  & \alpha_2  \horzbar \alpha_2 & \ldots & 0 \horzbar 0\\
 \vdots & \vdots & \ddots & \vdots  \\
0 \horzbar 0  & 0 \horzbar 0 & \ldots & \alpha_C \horzbar \alpha_C\\
\end{matrix}
\right] \enspace ,
\end{align*}
where $\alpha_c = {1}/{|G_c|}$ for all $c \in [C]$.
Consequently, the distribution of the $i$-th row of $\*Z$ is given by
$\*Z_{i,.} \sim \mathcal{N}_{q}(0, \bm\Upsilon)$, where
$\bm\Upsilon = \*A^\top \bm\Sigma \*A$.
The correlation between the groups $r \in [q]$ and $l \in [q]$ is given by
$\Cor(\*Z_{.,r},\*Z_{.,l}) = \bm\Upsilon_{r, l} / \sqrt{\bm\Upsilon_{r, r}\bm\Upsilon_{l, l}}$.
As mentioned in \citet{Buhlmann2012}, because of the Gaussian assumption in
\Cref{eq:noise_model_theory}, we have the following compressed representation:
\begin{align}\label{eq:noise_model_compressed}
\*y = \*Z\bm\theta^* + \bm\eta \enspace ,
\end{align}
where $\bm\theta^* \in \bbR^{q}$,
$\bm\eta \sim \mathcal{N}(0,\sigma_{\eta}^2 \*I_n)$,
$\sigma_{\eta} \geq \sigma_{\varepsilon}>0$ and
$\bm\eta$ is independent of $\*Z$.
\begin{rk}\label{improved_conditioning}
  Dimension reduction is not the unique desirable effect
  of clustering with regards to statistical inference.
  Indeed, this clustering-based design compression
  also generally improves the conditioning of the problem.
  Assumptions needed for valid statistical inference are thus
  more likely to be met.
  For more details about this conditioning enhancement,
  the reader may refer to \citet{Buhlmann2012}.
\end{rk}
%
\subsection{Properties of the compressed model weights}
\label{sec:compressed_representation_property}
%
\begin{figure}[!ht]
    \centering
    \includegraphics[width=1.0\linewidth]{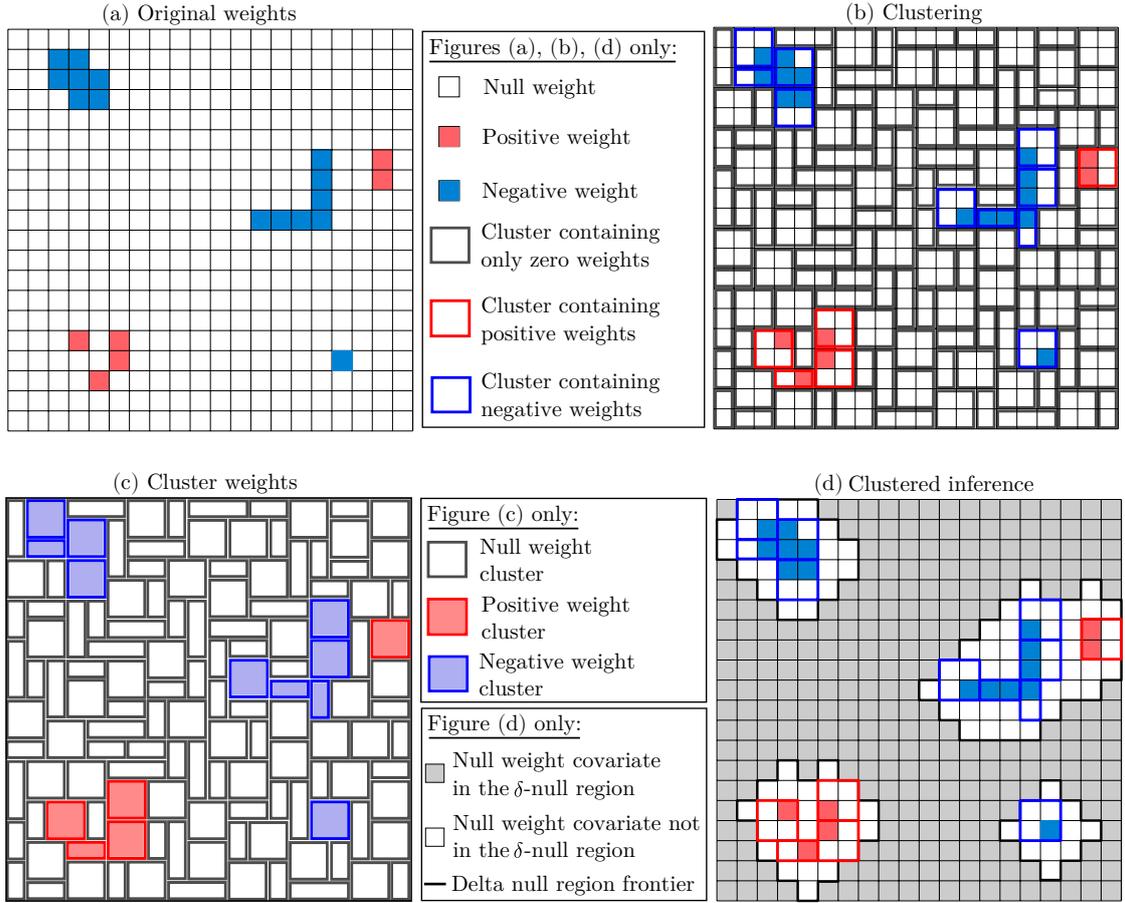}
    \caption{Clustered inference mechanism on 2D-spatially structured data.
    \underline{Item a:}
    Example of weight map with a 2D-structure.
    Voxels represent covariates, with blue (resp. red) corresponding to negative (resp. positive) weights; others are null weights.
    \underline{Item b:}
    Arbitrary choice of spatially constrained clustering
    with a diameter of $\delta=2$ units for the $\ell_1$-distance.
    Rectangles delimited by black lines represent clusters that
    contain only zero-weight covariates. Blue (resp. red) rectangles
    refer to clusters that contain negative-weight (resp. positive) covariates.
    \underline{Item c:}
    Compressed model weights: under the assumptions of
    \Cref{prop:weights_compressed}, the cluster weights
    share the same signs as the covariate weights they contain.
    Blue (resp. red) rectangles correspond to negative-weight
    (resp. positive-weights) clusters.
    \underline{Item d:}
    The grey area corresponds to the $\delta$-null region ($\delta = 2$).
    Under the same assumptions, the non-zero weight groups have no
    intersection with the $\delta$-null region.}
    \label{fig:full_fig}
\end{figure}
We now give a property of the weights of the compressed problem
which is a consequence of \citet[Proposition 4.3]{Buhlmann2012}.
\begin{prop}\label{prop:weights_compressed}
Considering the Gaussian linear model in \Cref{eq:noise_model_theory}
and assuming:
\begin{align*}
\begin{split}
& (i) ~ \text{for all}~ c \in [C], ~ \text{for all}~ (j, k) \in (G_c)^2,
~ \bm\Sigma_{j, k} \geq 0 \enspace , \\
& (ii) ~ \text{for all}~ c \in [C], ~ \text{for all}~ c^\prime \in [C] \setminus \{ c \},
~ \bm\Upsilon_{c, c^\prime} = 0 \enspace , \\
& (iii) ~ \text{for all}~ c \in [C],
\left(\bm\beta^*_j \geq 0 ~ \text{for all}~ j \in G_c \right) \orr
\left(\bm\beta^*_j \leq 0 ~ \text{for all}~ j \in G_c \right) \enspace , \\
\end{split}
\end{align*}
then, in the compressed representation \Cref{eq:noise_model_compressed},
for $c \in [C]$, $\bm\theta_c^* \neq 0$ if and only if there exists
$j \in G_c$ such that $\bm\beta_j^* \neq 0$.
If such an index $j$ exists then
$\sign(\bm\theta_r^*) = \sign(\bm\beta_j^*)$.
\end{prop}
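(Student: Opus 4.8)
The plan is to make $\bm\theta^*$ explicit as a population quantity and then read off its sign pattern from assumptions (i)--(iii). Since every coordinate of $\*y$ and of $\*Z=\*X\*A$ is a linear functional of the Gaussian pair $(\*X,\bm\varepsilon)$, the residual $\bm\eta=\*y-\*Z\bm\theta^*$ of \Cref{eq:noise_model_compressed} is independent of $\*Z$ exactly when $\bm\theta^*$ is the population least-squares coefficient of $\*y$ on $\*Z$. Using $\mathbb{E}[\*X_{i,.}^\top\*X_{i,.}]=\bm\Sigma$ and $\bm\varepsilon\perp\*X$, this gives $\bm\theta^*=\bm\Upsilon^{-1}\*A^\top\bm\Sigma\bm\beta^*$, where $\bm\Upsilon$ is invertible because, by (ii), it is diagonal with strictly positive diagonal entries (each $\bm\Upsilon_{c,c}$ is a variance).

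The first step is to use (ii) to decouple the system: since $\bm\Upsilon^{-1}$ is diagonal,
\[
\bm\theta^*_c=\frac{1}{\bm\Upsilon_{c,c}}\,(\*A^\top\bm\Sigma\bm\beta^*)_c=\frac{1}{|G_c|\,\bm\Upsilon_{c,c}}\sum_{j\in G_c}\sum_{k\in[p]}\bm\Sigma_{j,k}\,\bm\beta^*_k\qquad\text{for every }c\in[C].
\]
The crux is then to show that the between-cluster part $\sum_{j\in G_c}\sum_{k\notin G_c}\bm\Sigma_{j,k}\bm\beta^*_k$ does not contribute, so that $\bm\theta^*_c$ reduces to $\frac{1}{|G_c|\bm\Upsilon_{c,c}}\sum_{j,k\in G_c}\bm\Sigma_{j,k}\bm\beta^*_k$. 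This is precisely the content extracted from \citet[Proposition 4.3]{Buhlmann2012}: under the block-uncorrelated structure (ii), together with the within-cluster nonnegativity (i), the off-diagonal blocks of $\bm\Sigma$ are irrelevant to the cluster-level projection coefficient. I would either quote that statement directly after verifying that (i)--(iii) imply its hypotheses, or re-derive the vanishing of the cross terms in the present notation.

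Once $\bm\theta^*_c=\frac{1}{|G_c|\bm\Upsilon_{c,c}}\sum_{j,k\in G_c}\bm\Sigma_{j,k}\bm\beta^*_k$ with a positive prefactor, the conclusion is a short sign argument. By (iii), assume without loss of generality that $\bm\beta^*_k\ge 0$ for all $k\in G_c$ (the other case being symmetric); by (i), $\bm\Sigma_{j,k}\ge 0$ for all $j,k\in G_c$; hence every summand is $\ge 0$, so $\bm\theta^*_c\ge 0$, and $\bm\theta^*_c=0$ if and only if every summand vanishes. If $\bm\beta^*_{k_0}\neq 0$ for some $k_0\in G_c$, the diagonal term $\bm\Sigma_{k_0,k_0}\bm\beta^*_{k_0}$ is strictly positive, so $\bm\theta^*_c>0$; conversely, if $\bm\beta^*_k=0$ for all $k\in G_c$ then trivially $\bm\theta^*_c=0$. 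This yields the equivalence $\bm\theta^*_c\neq 0\iff\exists\,j\in G_c,\ \bm\beta^*_j\neq 0$, and in that case $\sign(\bm\theta^*_c)$ equals the common sign of $\{\bm\beta^*_j:j\in G_c\}$, i.e.\ $\sign(\bm\beta^*_j)$ for any such nonzero index.

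I expect the only genuinely delicate step to be the middle one: handling the between-cluster contributions to the projection coefficient under the comparatively weak hypothesis (ii), which only asks that the cluster \emph{averages} be uncorrelated rather than all pairs of covariates. The identification of $\bm\theta^*$ with the population least-squares coefficient and the final sign computation are both routine.
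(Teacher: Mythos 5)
Your proposal follows essentially the same route as the paper's proof: the paper's entire argument is to invoke \citet[Proposition 4.3]{Buhlmann2012} to obtain $\bm\theta^*_c=|G_c|\sum_{j\in G_c}w_j\bm\beta^*_j$ with $w_j=\sum_{k\in G_c}\bm\Sigma_{j,k}\big/\sum_{k,k'\in G_c}\bm\Sigma_{k,k'}$ --- which is algebraically identical to your within-cluster expression $\frac{1}{|G_c|\bm\Upsilon_{c,c}}\sum_{j,k\in G_c}\bm\Sigma_{j,k}\bm\beta^*_k$ --- and then to run exactly the sign argument you give (nonnegativity of the weights from $(i)$, common sign from $(iii)$, strict positivity from the diagonal entries). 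What you add is the explicit identification of $\bm\theta^*$ as the population projection coefficient $\bm\Upsilon^{-1}\*A^\top\bm\Sigma\bm\beta^*$, and a sharper localization of the one nontrivial step, namely the disappearance of the between-cluster contribution; your closing sign computation is if anything slightly more careful than the paper's.

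One caveat, precisely on the step you flag as delicate: your fallback plan of ``re-deriving the vanishing of the cross terms'' from $(ii)$ alone cannot succeed. Assumption $(ii)$ only says that $\sum_{k\in G_{c'}}\Cov(\*Z_{.,c},\*X_{.,k})=0$ for $c'\neq c$, i.e.\ it annihilates the \emph{unweighted} sum of cross-covariances, whereas the between-cluster contribution to $\bm\theta^*_c$ is the $\bm\beta^*$-weighted sum $\sum_{k\notin G_c}\Cov(\*Z_{.,c},\*X_{.,k})\bm\beta^*_k$, which need not vanish unless the individual cross-covariances are zero (block-diagonal $\bm\Sigma$) or $\bm\beta^*$ is constant on each foreign cluster. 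For instance, $G_1=\{1\}$, $G_2=\{2,3\}$, $\bm\Sigma_{1,2}=-\bm\Sigma_{1,3}\neq 0$, $\bm\Sigma_{2,3}=0$, $\bm\beta^*=(0,1,0)^\top$ satisfies $(i)$--$(iii)$ yet yields $\bm\theta^*_1=\bm\Sigma_{1,2}\neq 0$ although $G_1$ contains only a null covariate. So the only viable route is the one the paper takes --- quoting the cited proposition --- and the part of your plan that reads ``after verifying that $(i)$--$(iii)$ imply its hypotheses'' is exactly where the care is needed: the cancellation requires a pointwise condition on cross-cluster covariances (compare the relaxed proposition in the supplement, where a nonzero bound $\bm\nu_c$ on these produces a bias term $\bm\kappa_c$), not merely the aggregate condition $(ii)$. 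This is a gap shared with the paper's own proof rather than one specific to your argument, but your write-up should not suggest the cross terms can be killed by direct computation under $(ii)$.
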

\begin{proof}
See \Cref{sec:proof_weights_compressed}.
\end{proof}
Assumption $(i)$ states that the covariates in a group are all
positively correlated.
Let us define the group diameter (or cluster diameter)
of $G_c$ by the distance
that separates its two most distant covariates, \ie
$\Diam(G_c) = \max \{d(j, k) : (j, k) \in (G_c)^2\}$
and the clustering diameter of $\mathcal{G}$
by the largest group diameter, \ie
$\Diam(\mathcal{G}) = \max \{\Diam(G_c) : c \in [C] \}$.
In \Cref{fig:full_fig}-(b), we propose a clustering of
the initial weight map in \Cref{fig:full_fig}-(a) for which the
clustering diameter is equal to $2$ for the $\ell_1$ distance.
Assumption $(i)$ notably holds when $\Diam(\mathcal{G}) \leq \delta$ under the spatial homogeneity
assumption (\Cref{ass:assumption_1}) with parameter $\delta$.
Assumption $(ii)$ assumes independence of the groups.
A sufficient condition is when the covariates covariance
matrix $\bm\Sigma$ is block diagonal, with blocks coinciding
with the group structure; \ie assumption $(ii)$ holds when covariates
of different groups are independent.
In practice, this assumption may be unmet, and we relax it in
\Cref{sec:cdl_properties_corr}.
Assumption $(iii)$ states that all the weights in a group share
the same sign.
This is notably the case when the clustering diameter is smaller
than $\delta$ and the weight map satisfies the sparse-smooth
assumption (\Cref{ass:assumption_2}) with parameter $\delta$.
For instance, a clustering-based compressed representation
of the weight map in \Cref{fig:full_fig}-(a) is given in
\Cref{fig:full_fig}-(c).
%
\subsection{Statistical inference on the compressed model}
\label{sec:stat_inf}
%
To perform the statistical inference on the compressed problem
\Cref{eq:noise_model_compressed}, we could consider any statistical
inference procedure that produces cluster-wise p-values
$\hat{p}^{\mathcal{G}} = (\hat{p}^{\mathcal{G}}_c)_{c \in [C]}$,
given a choice of clustering $\mathcal{G}$, that control the type 1 error.
More precisely, for any $c \in [C]$, under $H_{0}(G_c)$,
\ie the null hypothesis which states that $\theta_c^*$ is
equal to zero in the compressed model, we assume that
the p-value associated with the $c$-th cluster verifies:
\begin{align}\label{eq:p_value_type_1_error_control}
\begin{split}
  \bbP(\hat{p}^{\mathcal{G}}_c \leq \alpha)
  \leq \alpha \enspace .
\end{split}
\end{align}
To correct for multiple comparisons, we consider Bonferroni correction
\citep{dunn1961} which is a conservative procedure but has the advantage
of being valid without any additional assumptions.
Furthermore, here the correction factor is only equal to the number of
groups, not the number of covariates.
Then, the family of corrected cluster-wise p-values
$\hat{q}^{\mathcal{G}} = (\hat{q}^{\mathcal{G}}_c)_{c \in [C]}$
is defined by:
\begin{align}\label{eq:corrected_p_value}
\hat{q}^{\mathcal{G}}_c = \min\{1, C \times {\hat{p}^{\mathcal{G}}_c}\}
\enspace .
\end{align}
Let us denote by $N_{\mathcal{G}}(\bm\theta^*)$ (or simply
$N_{\mathcal{G}}$) the null region in the compressed problem
for a given choice of clustering $\mathcal{G}$, \ie
$N_{\mathcal{G}}(\bm\theta^*) = \left\{c \in [C] : \bm\theta^*_c = 0 \right\}$.
Then, for all $\alpha \in (0, 1)$:
\begin{align}\label{eq:control_fwer_2}
\mbox{FWER}_{\alpha}(\hat{q}^{\mathcal{G}}) =
\bbP(\min_{c \in N_{\mathcal{G}}}\hat{q}^{\mathcal{G}}_c \leq \alpha)
\leq \alpha \enspace .
\end{align}
This means that the cluster-wise p-value family $\hat{q}^{\mathcal{G}}$
controls FWER.

\subsection{De-grouping}
\label{sec:de-grouping}
%
Given the families of cluster-wise p-values
$\hat{p}^{\mathcal{G}}$ and corrected p-values $\hat{q}^{\mathcal{G}}$
as defined in \Cref{eq:p_value} and \Cref{eq:corrected_p_value},
our next aim is to derive families of p-values and corrected p-values
related to the covariates of the original problem.
To construct these families, we simply set the (corrected) p-value of
the $j$-th covariate to be equal to the (corrected) p-value of its
corresponding group:
\begin{align}\label{eq:de_grouping_p_values}
\begin{split}
& \text{for all}~ j \in [p], \quad \hat{p}_j =
\sum_{c \in [C]} \mathds{1}_{\{j \in G_c\}} ~ \hat{p}^{\mathcal{G}}_{c}
\enspace , \\
& \text{for all}~ j \in [p], \quad \hat{q}_j =
\sum_{c \in [C]} \mathds{1}_{\{j \in G_c\}} ~ \hat{q}^{\mathcal{G}}_{c}
\enspace .\\
\end{split}
\end{align}
\begin{prop}\label{prop:p_values_control}
Under the assumptions of \Cref{prop:weights_compressed}
and assuming that the clustering diameter is smaller
than $\delta$, then:

\medskip

\noindent (i) elements of the family $\hat{p}$ defined in
\Cref{eq:de_grouping_p_values} control the $\delta$-type 1 error:
\begin{align*}
\text{for all}~ j \in N^{\delta}, ~ \text{for all}~ \alpha \in (0, 1),
~ \bbP(\hat{p}_{j} \leq \alpha) \leq \alpha  \enspace ,
\end{align*}
(ii) the family $\hat{q}$ defined in \Cref{eq:de_grouping_p_values}
controls the $\delta$-FWER:
\begin{align*}
\text{for all}~ \alpha \in (0, 1),
~ \bbP (\min_{j \in N^{\delta}} (\hat{q}_j) \leq \alpha) \leq \alpha
\enspace .
\end{align*}
\end{prop}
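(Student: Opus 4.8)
The plan is to reduce both claims to the already-established FWER control of the cluster-wise corrected p-value family $\hat q^{\mathcal G}$ (equation \eqref{eq:control_fwer_2}), by showing that every covariate in the $\delta$-null region $N^\delta$ sits inside a cluster whose compressed weight $\bm\theta^*_c$ vanishes, i.e.\ that its group index lies in $N_{\mathcal G}$. The key structural input is \Cref{prop:weights_compressed}: its hypotheses $(i)$--$(iii)$ are exactly what is guaranteed here, since (by the remarks following that proposition) assumption $(i)$ follows from $\Diam(\mathcal G)\le\delta$ together with \Cref{ass:assumption_1}, assumption $(iii)$ follows from $\Diam(\mathcal G)\le\delta$ together with \Cref{ass:assumption_2}, and $(ii)$ is one of the hypotheses we are carrying forward. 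Granting that, \Cref{prop:weights_compressed} tells us $\bm\theta^*_c\neq 0$ iff $G_c$ contains some non-null covariate of $\bm\beta^*$.

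First I would prove the containment $\{c : G_c\cap S(\bm\beta^*)\neq\emptyset\}^c \supseteq$ (the set of groups meeting $N^\delta$), equivalently: if $j\in N^\delta$ and $j\in G_c$, then $c\in N_{\mathcal G}$. Take $j\in N^\delta$ in cluster $G_c$. For any $k\in G_c$ we have $d(j,k)\le\Diam(G_c)\le\Diam(\mathcal G)\le\delta$, so by definition of $N^\delta$ (\Cref{df:delta_null_region}) we get $\bm\beta^*_k=0$. Hence $G_c$ is entirely contained in $N(\bm\beta^*)$, so by \Cref{prop:weights_compressed} $\bm\theta^*_c=0$, i.e.\ $c\in N_{\mathcal G}$. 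This is the crux; the rest is bookkeeping.

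For part $(i)$: fix $j\in N^\delta$, let $G_c$ be its cluster; by the above $c\in N_{\mathcal G}$, so the hypothesis $H_0(G_c)$ holds in the compressed model, and by \eqref{eq:p_value_type_1_error_control} we have $\bbP(\hat p^{\mathcal G}_c\le\alpha)\le\alpha$. Since by the de-grouping definition \eqref{eq:de_grouping_p_values} $\hat p_j=\hat p^{\mathcal G}_c$, the conclusion $\bbP(\hat p_j\le\alpha)\le\alpha$ is immediate. For part $(ii)$: again using de-grouping, $\min_{j\in N^\delta}\hat q_j = \min_{j\in N^\delta}\hat q^{\mathcal G}_{c(j)}$ where $c(j)$ is the cluster of $j$; since every such $c(j)$ lies in $N_{\mathcal G}$, this minimum is bounded below by $\min_{c\in N_{\mathcal G}}\hat q^{\mathcal G}_c$, hence
\begin{align*}
\bbP\!\left(\min_{j\in N^\delta}\hat q_j\le\alpha\right)
\le \bbP\!\left(\min_{c\in N_{\mathcal G}}\hat q^{\mathcal G}_c\le\alpha\right)
=\mbox{FWER}_\alpha(\hat q^{\mathcal G})\le\alpha,
\end{align*}
the last inequality being \eqref{eq:control_fwer_2}. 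One edge case to dispatch: if $N^\delta=\emptyset$ the minimum is over an empty set and the $\delta$-FWER is $0$, so the bound holds trivially. I expect the only real subtlety to be making the verification of hypotheses $(i)$ and $(iii)$ of \Cref{prop:weights_compressed} fully explicit — checking that "$\Diam(\mathcal G)\le\delta$" plus the two spatial assumptions genuinely delivers the "all $\bm\Sigma_{j,k}\ge0$ within a group" and "constant sign within a group" conditions — but this is exactly the content of the commentary already given after \Cref{prop:weights_compressed}, so it is a matter of citing it rather than new work.
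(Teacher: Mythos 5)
Your proposal is correct and follows essentially the same route as the paper's own proof: the crux in both is that the diameter bound forces every cluster containing a $\delta$-null covariate to lie entirely in $N(\bm\beta^*)$, so that \Cref{prop:weights_compressed} places its group index in $N_{\mathcal G}$, after which (i) follows from \Cref{eq:p_value_type_1_error_control} and (ii) from $\min_{j\in N^{\delta}}\hat q_j \geq \min_{c\in N_{\mathcal G}}\hat q^{\mathcal G}_c$ together with \Cref{eq:control_fwer_2}. Your extra verification of hypotheses $(i)$ and $(iii)$ of \Cref{prop:weights_compressed} is harmless but unnecessary here, since the proposition assumes them directly.
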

\begin{proof}
See \Cref{sec:proof_degrouping}.
\end{proof}
The previous de-grouping properties can be seen in
\Cref{fig:full_fig}-(d).
Roughly, since all the clusters that intersect
the $\delta$-null region have low p-value
with low probability, one can conclude that all the
covariates of the $\delta$-null region also have
low p-value with low probability.

\subsection{Ensembling}
\label{sec:ensembling_theory}
%
Let us consider $B$ families of corrected p-values
that control the $\delta$-FWER.
For any $b \in [B]$, we denote by $\hat{q}^{(b)}$ the $b$-th family
of corrected p-values.
Then, we show that the ensembling method proposed in
\citet{Meinshausen2008} yields a family that also enforces
$\delta$-FWER control.
\begin{prop}\label{prop:p_values_aggregation}
Assume that, for $b \in [B]$,
the p-value families $\hat{q}^{(b)}$
control the $\delta$-FWER.
Then, for any $\gamma \in (0, 1)$, the ensembled p-value family
$\tilde{q} (\gamma)$ defined by:
\begin{align}\label{eq:p_values_aggregation}
  \text{for all}~ j \in [p], ~ \tilde{q}_j (\gamma) =
  \min \left\{ 1, \gamma\mbox{-quantile}
  \left( \left\{ \frac{\hat{q}^{(b)}_j}{\gamma} :
  b \in [B] \right\} \right) \right\} \enspace ,
\end{align}
controls the $\delta$-FWER.
\end{prop}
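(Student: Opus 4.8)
The plan is to reduce the claim about the $\delta$-FWER of the aggregated family $\tilde q(\gamma)$ to the known fact that each $\hat q^{(b)}$ controls the $\delta$-FWER, via a union-bound / Markov-type argument on the aggregation formula of \citet{Meinshausen2008}. Fix $\alpha \in (0,1)$ and write $N^\delta = N^\delta(\bm\beta^*)$. I want to bound $\bbP(\min_{j \in N^\delta} \tilde q_j(\gamma) \le \alpha)$. First I would note that it suffices to bound, for each fixed $j \in N^\delta$, the probability $\bbP(\tilde q_j(\gamma) \le \alpha)$ by $\alpha$, \emph{provided} one is careful: a naive union bound over $j \in N^\delta$ would lose a factor $|N^\delta|$. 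The correct route is the one used for the classical FWER version of this aggregation result — one shows that the event $\{\min_{j\in N^\delta}\tilde q_j(\gamma)\le\alpha\}$ is contained in an event whose probability is controlled directly, without summing over $j$.

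Concretely, the key step is: for any single index $j$, $\tilde q_j(\gamma) \le \alpha$ forces the $\gamma$-quantile of $\{\hat q^{(b)}_j/\gamma : b\in[B]\}$ to be at most $\alpha$, which means at least a $\gamma$-fraction of the $B$ values $\hat q^{(b)}_j$ are $\le \gamma\alpha$. Hence $\min_{j\in N^\delta}\tilde q_j(\gamma)\le\alpha$ implies that \emph{there exists} $j\in N^\delta$ with at least $\lceil \gamma B\rceil$ indices $b$ satisfying $\hat q^{(b)}_j\le\gamma\alpha$, which in turn implies $\sum_{b=1}^B \mathds{1}\{\min_{j\in N^\delta}\hat q^{(b)}_j \le \gamma\alpha\} \ge \gamma B$ — because for that witnessing $j$, each of those $b$'s has $\min_{k\in N^\delta}\hat q^{(b)}_k \le \hat q^{(b)}_j \le \gamma\alpha$. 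Therefore
\begin{align*}
\bbP\Big(\min_{j\in N^\delta}\tilde q_j(\gamma)\le\alpha\Big)
\;\le\; \bbP\Big(\tfrac{1}{B}\textstyle\sum_{b=1}^B \mathds{1}\{\min_{j\in N^\delta}\hat q^{(b)}_j \le \gamma\alpha\} \ge \gamma\Big).
\end{align*}
Applying Markov's inequality to the nonnegative random variable $\frac1B\sum_b \mathds{1}\{\min_{j\in N^\delta}\hat q^{(b)}_j\le\gamma\alpha\}$ and then using that each $\hat q^{(b)}$ controls the $\delta$-FWER (so $\bbP(\min_{j\in N^\delta}\hat q^{(b)}_j\le\gamma\alpha)\le\gamma\alpha$), the right-hand side is at most $\frac{1}{\gamma}\cdot\frac1B\sum_{b=1}^B \gamma\alpha = \alpha$. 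This gives $\delta\text{-FWER}_\alpha(\tilde q(\gamma))\le\alpha$, as desired; the final $\min\{1,\cdot\}$ truncation in \Cref{eq:p_values_aggregation} only decreases the p-values and hence does not affect the inequality.

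I expect the main obstacle to be the bookkeeping in the first containment step — making precise the passage from "$\tilde q_j(\gamma)\le\alpha$" to "a $\gamma$-fraction of the $\hat q^{(b)}_j$ are $\le\gamma\alpha$", since this hinges on the exact convention for the $\gamma$-quantile of a finite multiset of size $B$ (whether it is the $\lceil\gamma B\rceil$-th order statistic, and strict versus non-strict inequalities). One must check that the definition used is such that $\gamma\text{-quantile}(\{a_b\}) \le t$ implies $|\{b : a_b \le t\}| \ge \gamma B$; with the order-statistic convention $\gamma\text{-quantile} = a_{(\lceil\gamma B\rceil)}$ this holds. Everything else — the monotonicity argument that the witnessing $j$ pulls down $\min_{k\in N^\delta}\hat q^{(b)}_k$, and the Markov step — is routine. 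It is worth remarking that this argument does \emph{not} require the $\hat q^{(b)}$ to be independent across $b$, which is essential since in \Cref{alg:EnCluDL_new} they are all computed from the same $(\*X,\*y)$ with only the clustering resampled.
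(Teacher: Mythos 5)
Your proof is correct and follows essentially the same route as the paper's: the same reduction of the event $\{\min_{j\in N^{\delta}}\tilde q_j(\gamma)\le\alpha\}$ to the event that a $\gamma$-fraction of the bootstraps satisfy $\min_{j\in N^{\delta}}\hat q^{(b)}_j\le\gamma\alpha$ (you phrase it as an event containment via a witnessing $j$, the paper as a monotonicity inequality on the $\gamma$-quantile of the per-bootstrap minima, but these are the same step), followed by the same Markov inequality and the per-bootstrap $\delta$-FWER control. The quantile convention you flag as the potential sticking point is exactly the one the paper adopts in its Definition of the empirical $\gamma$-quantile, so that implication holds as you anticipated.
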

\begin{proof}
See \Cref{sec:proof_aggregation}.
\end{proof}
%
\subsection{$\delta$-FWER control}
\label{sec:encludl_property_theory}
%
We can now state our main result: the clustered inference and
ensembled clustered inference algorithms control the $\delta$-FWER.
\begin{thm}\label{prop:ecdl_properties_final}
Assume the model given in \Cref{eq:noise_model_theory}
and that the data structure assumptions,
\Cref{ass:assumption_1} and \Cref{ass:assumption_2},
are satisfied for a distance parameter larger than $\delta$.
Assume that all the clusterings considered have
a diameter smaller than $\delta$.
Assume that the uncorrelated cluster assumption,
\ie assumption $(ii)$ of \Cref{prop:weights_compressed},
is verified for each clustering and further assume that
the statistical inference performed on the compressed model
\Cref{eq:noise_model_compressed} is valid,
\ie \Cref{eq:p_value_type_1_error_control} holds.
Then, the p-value family obtained from the clustered inference
algorithm controls the $\delta$-FWER.
Additionally, the p-value family derived by the ensembled
clustered inference algorithm controls the $\delta$-FWER.
\end{thm}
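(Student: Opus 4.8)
The plan is to chain together the three propositions already established, the only genuine work being to check that every clustering used by the two algorithms satisfies the hypotheses of \Cref{prop:weights_compressed}. First, I would fix an arbitrary clustering $\mathcal{G} = \{G_1,\dots,G_C\}$ among those considered, so that $\Diam(\mathcal{G}) < \delta$, and verify assumptions $(i)$--$(iii)$ of \Cref{prop:weights_compressed}. Assumption $(ii)$, the uncorrelated-cluster property, is assumed directly in the statement. For $(i)$: any $(j,k)\in(G_c)^2$ satisfies $d(j,k)\le \Diam(G_c)\le \Diam(\mathcal{G}) < \delta$, and $\delta$ is at most the distance parameter of \Cref{ass:assumption_1}, hence $\bm\Sigma_{j,k}\ge 0$. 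For $(iii)$: the same bound $d(j,k) < \delta$ together with \Cref{ass:assumption_2} (whose distance parameter is again at least $\delta$) gives $\sign(\bm\beta^*_j) = \sign(\bm\beta^*_k)$ for all $(j,k)\in(G_c)^2$, so all weights in $G_c$ share one sign. Thus \Cref{prop:weights_compressed} applies to $\mathcal{G}$; combined with the validity of the compressed inference (\Cref{eq:p_value_type_1_error_control}) and the Bonferroni correction, the cluster-wise corrected family $\hat q^{\mathcal{G}}$ controls the FWER of the compressed model \Cref{eq:noise_model_compressed}, as in \Cref{eq:control_fwer_2}.

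Second, the clustered inference case is then immediate: the hypotheses of \Cref{prop:weights_compressed} hold and $\Diam(\mathcal{G}) < \delta$, so \Cref{prop:p_values_control}$(ii)$ shows that the de-grouped corrected family $\hat q$ returned by \Cref{alg:CluDL_new} controls the $\delta$-FWER. Here the $\delta$-null region $N^{\delta}$ depends only on $\bm\beta^*$ and the pairwise distances, not on $\mathcal{G}$, so it is a fixed set throughout. For the ensembled algorithm \Cref{alg:EnCluDL_new}, each bootstrap index $b\in[B]$ yields a clustering $\mathcal{G}^{(b)}$ which by hypothesis again has diameter $< \delta$, satisfies the uncorrelated-cluster assumption, and for which \Cref{eq:p_value_type_1_error_control} holds; the previous paragraph then gives that each $\hat q^{(b)}$ controls the $\delta$-FWER with the common null region $N^{\delta}$. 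Applying \Cref{prop:p_values_aggregation} to these $B$ families shows that the aggregated family $\tilde q(\gamma)$ of \Cref{eq:p_values_aggregation} --- which is exactly the output of \Cref{alg:EnCluDL_new} --- controls the $\delta$-FWER, completing the proof.

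The step I expect to require the most care is the status of the clusterings inside the bootstrap loop: they are built from the subsampled data $\*X^{(b)}$ and are therefore data-dependent, whereas \Cref{prop:weights_compressed}, \Cref{prop:p_values_control} and the validity assumption \Cref{eq:p_value_type_1_error_control} are phrased for a fixed clustering. The theorem circumvents this by assuming that the diameter bound, the uncorrelated-cluster property and the inference validity hold for \emph{every} clustering that can occur, so that one may reason conditionally on $\mathcal{G}^{(b)}$; I would make this conditioning explicit. I would also note that subsampling rows of $\*X$ preserves the Gaussian design together with the independent Gaussian noise, so the compressed linear model \Cref{eq:noise_model_compressed} still applies to the data $(\*X,\*y)$ actually passed to the \texttt{statistical\_inference} step, which is what legitimizes invoking \Cref{prop:weights_compressed} and \Cref{eq:p_value_type_1_error_control} in the ensembled setting.
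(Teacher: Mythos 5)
Your proof is correct and follows essentially the same route as the paper: verify assumptions $(i)$ and $(iii)$ of \Cref{prop:weights_compressed} from \Cref{ass:assumption_1} and \Cref{ass:assumption_2} via the diameter bound, invoke \Cref{prop:p_values_control} for the clustered case, and then \Cref{prop:p_values_aggregation} for the ensembled case. Your explicit remarks on the data-dependence of the bootstrap clusterings and on conditioning are a welcome clarification of a point the paper passes over silently, but they do not change the argument.
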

\begin{proof}
See \Cref{sec:proof_ecdl_properties_final}.
\end{proof}
\begin{rk}
  When the type 1 error control offered by the
  statistical inference procedure is only asymptotic,
  the result stated by \Cref{prop:ecdl_properties_final}
  remains true asymptotically.
  This is notably the case when using desparsified Lasso:
  under the assumptions of \Cref{prop:ecdl_properties_final} and
  the assumptions specific to desparsified Lasso (cf. \Cref{sec:stat_inf_dl}),
  ensemble of clustered desparsified Lasso (EnCluDL) controls
  the $\delta$-FWER asymptotically.
\end{rk}
\section{Numerical Simulations}
\label{sec:simulations_ecdl_proof}
%
\subsection{CluDL and EnCluDL}
\label{sec:CluDL_EnCluDL}

For testing the (ensembled) clustered inference algorithms,
we have decided to make the inference step using the desparsified Lasso
\citep{Zhang_Zhang14,vandeGeer_Buhlmann_Ritov_Dezeure14,Javanmard_Montanari14}
leading to the clustered desparsified Lasso (CluDL) and the ensemble
of clustered desparsified Lasso (EnCluDL) algorithms that were
first presented in \citet{Chevalier2018a}.

In \Cref{sec:stat_inf_dl}, we detail the assumptions and refinements
that occur when choosing the desparsified Lasso to perform the
statistical inference step.
A notable difference is the fact that all the results becomes asymptotic.
In \Cref{sec:algorithm_ECDL_journal}, we present a diagram illustrating
the mechanism of EnCluDL and analyse its numerical complexity.

\subsection{2D Simulation}
\label{sec:intro_2D_simulation}
%
We run a series of simulations on 2D data in order to give empirical
evidence of the theoretical properties of CluDL and EnCluDL and
compare their recovery properties with two other procedures.
For an easier visualization of the results, we consider one central scenario,
whose parameters are written in \textbf{bold} in the following of this section,
with several variations, changing only one parameter at a time.

In all these simulations, the feature space considered is a $2$D
square with edge length $H = 40$ leading to $p = H^{2} = \numprint{1600}$
covariates, with a sample size $n \in \{ 50, \mathbf{100}, 200, 400 \}$.
To construct $\bm\beta^*$, we define a $2$D weight map
$\tilde{\bm\beta}^*$ with four active regions (as illustrated
in \Cref{fig:estimated_weights}) and then flatten
$\tilde{\bm\beta}^*$ to a vector ${\bm\beta}^*$ of size $p$.
Each active region is a square of width $h \in \{ 2, \mathbf{4}, 6, 8 \}$,
leading to a size of support of $1\%$, $\mathbf{4\%}$, $9\%$ or $16\%$.
To construct the design matrix, we first build a $2$D data matrix
$\tilde{\*X}$ by drawing $p$ random normal vectors of size $n$ that
are spatially smoothed with a $2$D Gaussian filter to create
a correlation structure related to the covariates' spatial organization.
The same flattening process as before is used to get the design matrix
$\*X \in \bbR^{n \times p}$.
The intensity of the spatial smoothing is adjusted to achieve
a correlation between two adjacent covariates (local correlation) of
$\rho \in \{ 0.5, \mathbf{0.75}, 0.9, 0.95 \}$.
We also set the noise standard deviation
$\sigma_{\varepsilon} \in \{ 1, \mathbf{2}, 3, 4 \}$, which corresponds to a
signal to noise ratio (SNR) $\mbox{SNR}_{y} \in \{ 6.5, \mathbf{3.5}, 2.2, 1.5\}$,
where the SNR is defined by
$\SNR_{y} = \normin{\*X\bm\beta^*}_2 / \normin{\bm\varepsilon}_2$.
For each scenario, we run $100$ simulations
to derive meaningful statistics.
A Python implementation of the simulations and
procedures presented in this paper
is available on \texttt{https://github.com/ja-che/hidimstat}.
Regarding the clustering step in CluDL and EnCluDL, we used a spatially
constrained agglomerative clustering algorithm with Ward criterion.
This algorithm is popular in many applications \citep{Varoquaux2012,
dehman2015performance}, as it tends to create compact, balanced clusters.
Since the optimal number of clusters $C$ is unknown a
priori, we have tested several values $C \in [100 ; 400]$.
A smaller $C$ generally improves recovery,
but entails a higher spatial tolerance.
Following theoretical considerations, we compute the largest cluster diameter
for every value of $C$ and set $\delta$ to this value.
We obtained the couples
$(C,\delta) \in \{ (100,8), (200,6), (300,5), (400,4) \}$.
The tolerance region is represented in \Cref{fig:estimated_weights}
for $\delta = 6$.
Concerning EnCluDL, we took a number of bootstraps $B$ equal to $25$
as we observed that it was sufficient to benefit from most of the effect
of clustering randomization.
%
\subsection{Alternative methods}
\label{sec:alternative_methods}
%
We compare the recovering properties of CluDL and EnCluDL with two
other procedures: desparsified Lasso and knockoffs.
Contrarily to CluDL and EnCluDL, none of these includes a compression
step.
The version of the desparsified Lasso we have tested is the one presented in
\citet{vandeGeer_Buhlmann_Ritov_Dezeure14}, that outputs p-values.
Using Bonferroni correction it controls the classical FWER
at any desired rate.
The original version of knockoffs \citep{barber2015, candes2018}
only controls the false discovery rate (FDR) which is a weaker
control than the classical FWER.
Yet \citet{janson2016} modifies the covariate selection process
leading to a procedure that controls the $k$-FWER, \ie the probability of
making at least $k$ false discoveries.
We tested this last extension of knockoffs.
Depending on the nominal rate at which we want to control the $k$-FWER,
the choice of $k$ is not arbitrary.
More precisely, if we want a $k$-FWER control at $10 \%$,
we need to tolerate $k = 4$ at least, otherwise the estimated
support would always be empty.

Since $k$-FWER and $\delta$-FWER controls are both weaker than the usual
FWER control whenever $k > 1$ and $\delta > 0$, one can expect
desparsified Lasso to be less powerful than knockoffs, CluDL
and EnCluDL.
Besides, there is no relation between $k$-FWER
and $\delta$-FWER controls when $k > 1$ and $\delta > 0$,
hence it is not possible to establish which one
is less prohibitive for support recovery.
However, when data are spatially structured, $\delta$-FWER control might be
more relevant since it controls the very undesirable
far-from-support false discoveries.

\subsection{Results}
\label{sec:results_theory}
%
\begin{figure}[!ht]
      \centering
      \begin{minipage}{0.193\linewidth}
        \centering\includegraphics[width=\linewidth]
        {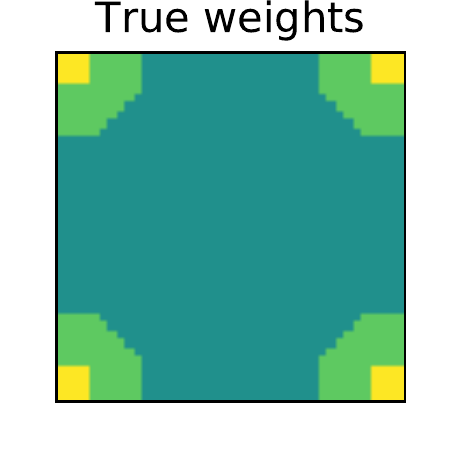}
      \end{minipage}
      \begin{minipage}{0.193\linewidth}
        \centering\includegraphics[width=\linewidth]
        {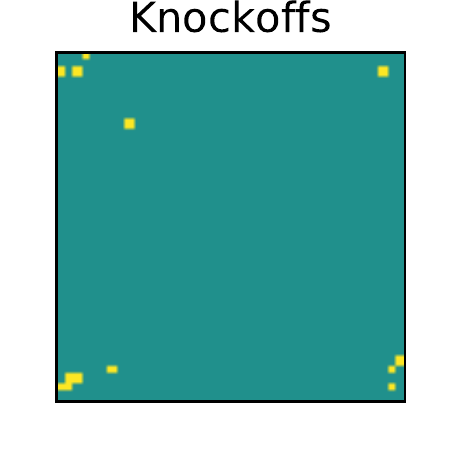}
      \end{minipage}
      \begin{minipage}{0.193\linewidth}
        \centering\includegraphics[width=\linewidth]
        {new_2D_dl.pdf}
      \end{minipage}
      \begin{minipage}{0.193\linewidth}
        \centering\includegraphics[width=\linewidth]
        {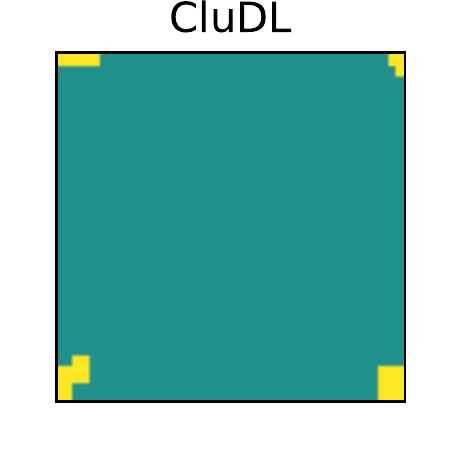}
      \end{minipage}
      \begin{minipage}{0.193\linewidth}
        \centering\includegraphics[width=\linewidth]
        {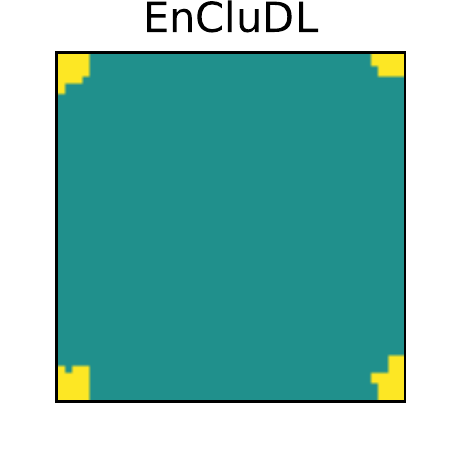}
      \end{minipage}
      \caption{
      True support and estimated support for the first seed of the central scenario.
      \underline{Left:}
      The support in yellow is composed of four regions of
      width $h = 4$ covariates.
      The tolerance region in green surrounds the support,
      its width is $\delta = 6$ covariates.
      The remaining covariates in blue form the $\delta$-null region.
      \underline{Others:}
      The yellow squares are the covariates selected by each method.
      Knockoffs selects few covariates when
      controlling the $k$-FWER at $10\%$ for $k = 4$.
      Desparsified Lasso only retrieves $3$ covariates
      when controlling the FWER at $10\%$.
      For $C = 200$, CluDL and EnCluDL have good power and control
      the $\delta$-FWER at $10\%$ for $\delta = 6$.
      }
      \label{fig:estimated_weights}
\end{figure}

In \Cref{fig:estimated_weights}, we plot the maps estimated by
knockoffs, desparsified Lasso, CluDL and EnCluDL for $C = 200$
when solving the first seed of the central scenario simulation.
Regarding knockoffs and desparsified Lasso solutions,
we notice that the power is low and the methods select few covariates
in each predictive region.
The CluDL method is more powerful and recovers groups of covariates that
correspond more closely to the true weights.
However, the shape of the CluDL solution depends on
the clustering choice.
The EnCluDL solution seems even more powerful than the CluDL one
and recovers groups of covariates that correspond almost perfectly
to the true weights.
Both CluDL and EnCluDL are only accurate up to the spatial
tolerance which is $\delta = 6$, but EnCluDL fits the ground
truth more tightly.

\begin{figure}[!ht]
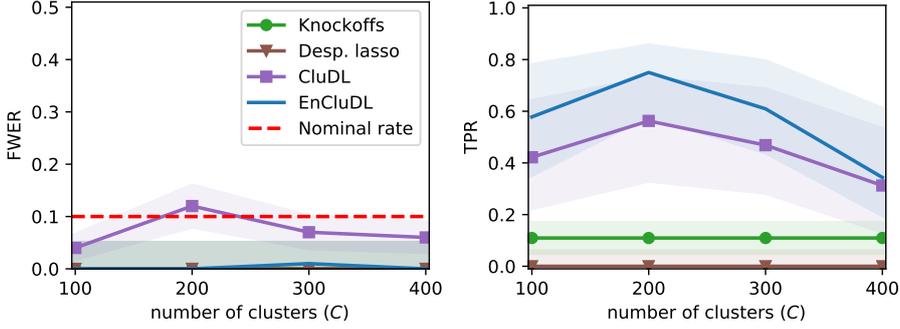

  \centering
  \begin{minipage}{0.4\linewidth}
    \centering\includegraphics[width=\linewidth]
    {fwer_main_scenario.pdf}
  \end{minipage}
  \begin{minipage}{0.4\linewidth}
    \centering\includegraphics[width=\linewidth]
    {tpr_main_scenario.pdf}
  \end{minipage}
  \caption{
  Results for fixed simulation parameters
  corresponding to the central scenario simulation.
  The green line with circles correspond to knockoffs,
  the brown line with triangles is the desparsified Lasso,
  the purple squared line correspond to CluDL and
  the blue plain line is EnCluDL.
  \underline{Left:} Empirical FWER for desparsified Lasso,
  $k$-FWER for knockoffs and $\delta$-FWER for CluDL and EnCluDL.
  The $80\%$ confidence intervals are obtained by Binomial approximation.
  \underline{Right:} Median true positive rate (TPR)
  for all the procedures, together with $80\%$ confidence interval
  obtained by taking the first decile and last decile TPR.
  }
  \label{fig:central_scenario}
\end{figure}

In \Cref{fig:central_scenario}, we focus on the central scenario to
get more insight about the statistical properties of the methods
and the influence of the $C$ hyper-parameter for CluDL and EnCluDL.
First, we observe that all methods reach the targeted control:
desparsified Lasso controls the FWER, knockoffs control the $k$-FWER
and, CluDL and EnCluDL control the $\delta$-FWER.
Second, considering the true positive rates (TPR), we notice that the
methods that do not integrate a compression step, \ie knockoffs
and desparsified Lasso, have a limited statistical power due to $n \ll p$.
However, CluDL has decent power and EnCluDL improves over CluDL thanks
to clustering randomization.
Finally, CluDL and EnCluDL are flexible with respect
to the choice of $C$ since the TPR varies quite slowly with $C$.

We have also studied the influence of the simulation parameters
by varying one parameter of the central scenario. The corresponding
results are available in \Cref{sec:additional_figures}.
The main conclusion gained from these complementary results is the fact
that, up to the limit given by the desired spatial tolerance $\delta$,
the choice of $C$ should be made in function of the data structure.
More precisely, good clustering creates clusters that are weakly
correlated and contains covariates that are highly correlated.
This observation is linked to assumption ($ii$) of
\Cref{prop:weights_compressed}.

\section{Discussion}
\label{sec:discussion}
%
When $n \ll p$, statistical inference on predictive model
parameters is a hard problem.
However, when the data are spatially structured, we have shown that
ensembled clustered inference procedures are attractive,
as they exhibit statistical guarantees and good power.
The price to pay is to accept that inference is only accurate up to
spatial distance $\delta$ corresponding to the clustering diameter,
thus replacing FWER with $\delta$-FWER control guarantees.

One of the most obvious field of application of this class of
algorithms is neuroscience where it can be used to solve source
localization problems.
In that regards, a wide empirical validation of EnCluDL has been
conducted in \citet{chevalier2021decoding} including fMRI data experiments.
Also, an extension of EnCluDL was proposed in
\citet{chevalier2020meg} to address the
magneto/electroencephalography source localization problem
which involves spatio-temporal data.

With EnCluDL, the statistical inference step is performed by
the desparsified Lasso.
In \citet{nguyen2019}, another ensembled clustered inference
method that leverages the knockoff technique
\citep{barber2015} leading to a procedure called ECKO
has been tested.
However, formal $\delta$-FDR control guarantees have not been
established yet for this model.
It would be also quite natural to try other inference techniques such as
the (distilled) conditional randomization test \citep{candes2018, liu2020fast}.

In the present work, we have only considered the linear regression setup.
However, combining the same algorithmic scheme with
statistical inference solutions for generalized linear models,
we could extend this work to the logistic regression setup.
This would extend the usability of ensembled clustered inference
to many more application settings.

\section*{Acknowledgement}
%
This study has been funded by Labex DigiCosme
(ANR-11-LABEX-0045-DIGICOSME) as part of the program
"Investissement d'Avenir" (ANR-11-IDEX-0003-02),
by the Fast-Big project (ANR-17-CE23-0011)
and the KARAIB AI Chair (ANR-20-CHIA-0025-01).
This study has also been supported by the European Union’s Horizon
2020 research and innovation program (Grant Agreement No. 945539,
Human Brain Project SGA3).

\section*{Supplementary material}
%
Supplementary material available online includes
an analysis of the technical assumptions and refinements
that occur when choosing the desparsified Lasso to perform the
statistical inference step in \Cref{sec:stat_inf_dl},
a diagram summarizing EnCluDL and a study
of the complexity of EnCluDL in \Cref{sec:algorithm_ECDL_journal},
a proposition for relaxing assumption $(ii)$ of
\Cref{prop:weights_compressed} in \Cref{sec:cdl_properties_corr},
complementary results for studying the influence of the simulation
parameters in \Cref{sec:additional_figures} and the proofs in
\Cref{sec:proofs_ECDL_journal}.

\clearpage

\bibliographystyle{plainnat}
\bibliography{./biblio}

\clearpage

\appendixpageoff
\appendixtitleoff
\renewcommand{\appendixtocname}{Supplementary material}
\begin{appendices}
{\LARGE{\textbf{Supplementary material for ``Spatially relaxed inference on high-dimensional linear models"}}}
\crefalias{section}{supp}
\crefalias{subsection}{supp}

\section{Desparsified Lasso on the compressed model}
\label{sec:stat_inf_dl}

Here, we clarify the assumptions and refinements that occur
when chosing the desparsified Lasso as the procedure that
performs the statistical inference on the compressed model.
The desparsified Lasso was first developed in \citet{Zhang_Zhang14}
and \citet{Javanmard_Montanari14}, and thoroughly analyzed in
\citet{vandeGeer_Buhlmann_Ritov_Dezeure14}.
Following notation in Eq.~\Cref{eq:noise_model_compressed},
the true support in the compressed model is denoted by
$S(\bm\theta^*) = \discsetin{c \in [C] : \bm\theta^*_c \neq 0}$
and its cardinality by $s(\bm\theta^*) = |S(\bm\theta^*)|$.
We also denote by $\bm\Omega \in \bbR^{C \times C}$ the inverse
of the population covariance matrix of
the groups, \ie $\bm\Omega = \bm\Upsilon^{-1}$.
Then, for $c \in [C]$, the sparsity of
the $c$-th row of $\bm\Omega$ (or $c$-th column) is
$s(\bm\Omega_{c,.}) = |S(\bm\Omega_{c,.})|$, where
$S(\bm\Omega_{c,.}) = \{c^\prime \in [C] : \bm\Omega_{c,c^\prime} \neq 0\}$.
We also denote the smallest eigenvalue of $\bm\Upsilon$ by
$\phi_{\min}(\bm\Upsilon) > 0$.
We can now state the assumptions required for
probabilistic inference with desparsified Lasso
\citep{vandeGeer_Buhlmann_Ritov_Dezeure14}:
\begin{thm}[Theorem 2.2 of \citet{vandeGeer_Buhlmann_Ritov_Dezeure14}]
\label{prop:Desparsified_Lasso}
Considering the model in Eq.~\Cref{eq:noise_model_compressed} and assuming:
\begin{align*}
\begin{split}
& (i) ~ 1 / \phi_{\min}(\bm\Upsilon) = \mathcal{O}(1) \enspace , \\
& (ii) ~ \max_{c \in [C]}(\bm\Upsilon_{c,c}) = \mathcal{O}(1) \enspace , \\
& (iii) ~ s(\theta^*) = o(\sqrt{n} / \log(C)) \enspace , \\
& (iv) ~ \max_{c \in [C]}(s(\bm\Omega_{c,.})) = o(n / \log(C)) \enspace , \\
\end{split}
\end{align*}
then, denoting by $\hat{\bm\theta}$ the desparsified Lasso estimator
derived from the inference procedure described in
\citet{vandeGeer_Buhlmann_Ritov_Dezeure14}, the following holds:
\begin{align*}
\begin{split}
& \sqrt{n}(\hat{\bm\theta} - \bm\theta^*) = \bm\xi + \bm\zeta \enspace , \\
& \bm\xi | \*Z \sim \mathcal{N}(0_{C}, \sigma_{\bm\eta}^2 \hat{\bm\Omega})
\enspace , \\
& \norm{\bm\zeta}_{\infty} = o_{\bbP}(1) \enspace , \\
\end{split}
\end{align*}
where $\hat{\bm\Omega}$ is such that
$\norm{\hat{\bm\Omega} - \bm{\Omega}}_{\infty} = o_{\bbP}(1)$.
\end{thm}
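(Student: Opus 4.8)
The plan is to reproduce the debiased (desparsified) Lasso argument of \citet{vandeGeer_Buhlmann_Ritov_Dezeure14} on the compressed design $\*Z$. First I would construct the estimator explicitly. Let $\hat{\bm\theta}^{\mathrm{L}}$ be the Lasso estimator for \Cref{eq:noise_model_compressed} with penalty $\lambda \asymp \sqrt{\log(C)/n}$, let $\hat{\bm\Upsilon} = \frac{1}{n}\*Z^\top \*Z$ be the empirical Gram matrix, and let $\hat{\bm\Theta}$ be the approximate inverse of $\hat{\bm\Upsilon}$ obtained row-by-row by nodewise Lasso regressions of each compressed covariate on the others. The defining (nodewise KKT) property of $\hat{\bm\Theta}$ is that $\norm{\hat{\bm\Theta}\hat{\bm\Upsilon} - \*I_C}_{\infty}$ is of order $\lambda$ with high probability. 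The desparsified estimator is then $\hat{\bm\theta} = \hat{\bm\theta}^{\mathrm{L}} + \frac{1}{n}\hat{\bm\Theta}\*Z^\top(\*y - \*Z\hat{\bm\theta}^{\mathrm{L}})$. Substituting $\*y = \*Z\bm\theta^* + \bm\eta$ and rearranging yields the exact identity
\begin{align*}
\sqrt{n}(\hat{\bm\theta} - \bm\theta^*) =
\underbrace{\tfrac{1}{\sqrt{n}}\hat{\bm\Theta}\*Z^\top\bm\eta}_{=:\,\bm\xi}
\;+\;
\underbrace{\sqrt{n}\,(\*I_C - \hat{\bm\Theta}\hat{\bm\Upsilon})(\hat{\bm\theta}^{\mathrm{L}} - \bm\theta^*)}_{=:\,\bm\zeta} \enspace ,
\end{align*}
which is the claimed decomposition.

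Second, I would identify $\bm\xi$ as the Gaussian term. Since $\bm\eta \sim \mathcal{N}(0,\sigma_{\eta}^2\*I_n)$ is independent of $\*Z$ by assumption, conditioning on $\*Z$ makes $\bm\xi$ a deterministic linear image of a Gaussian vector, so $\bm\xi \mid \*Z \sim \mathcal{N}\bigl(0_C,\ \sigma_{\eta}^2\,\hat{\bm\Theta}\hat{\bm\Upsilon}\hat{\bm\Theta}^\top\bigr)$. I would then set $\hat{\bm\Omega} := \hat{\bm\Theta}\hat{\bm\Upsilon}\hat{\bm\Theta}^\top$, giving exactly the stated conditional law $\bm\xi \mid \*Z \sim \mathcal{N}(0_C,\sigma_{\eta}^2\hat{\bm\Omega})$. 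The sup-norm consistency $\norm{\hat{\bm\Omega} - \bm\Omega}_{\infty} = o_{\bbP}(1)$, where $\bm\Omega = \bm\Upsilon^{-1}$, follows from the nodewise-Lasso analysis: at the population level $\bm\Omega\bm\Upsilon\bm\Omega^\top = \bm\Upsilon^{-1}$, and the row-sparsity assumption $(iv)$ on $\bm\Omega$ together with $(i)$--$(ii)$ controls the estimation error of each nodewise regression.

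Third, I would bound the remainder $\bm\zeta$ in sup norm by Hölder's inequality,
\begin{align*}
\norm{\bm\zeta}_{\infty} \leq \sqrt{n}\,\norm{\hat{\bm\Theta}\hat{\bm\Upsilon} - \*I_C}_{\infty}\,\norm{\hat{\bm\theta}^{\mathrm{L}} - \bm\theta^*}_{1} \enspace ,
\end{align*}
which reduces matters to two classical high-dimensional rates, each valid under $(i)$--$(iv)$: the nodewise KKT bound $\norm{\hat{\bm\Theta}\hat{\bm\Upsilon} - \*I_C}_{\infty} = \mathcal{O}_{\bbP}(\sqrt{\log(C)/n})$ and the Lasso $\ell_1$-estimation bound $\norm{\hat{\bm\theta}^{\mathrm{L}} - \bm\theta^*}_{1} = \mathcal{O}_{\bbP}(s(\bm\theta^*)\sqrt{\log(C)/n})$. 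Multiplying gives $\norm{\bm\zeta}_{\infty} = \mathcal{O}_{\bbP}(s(\bm\theta^*)\log(C)/\sqrt{n})$, which is $o_{\bbP}(1)$ precisely because assumption $(iii)$ imposes $s(\bm\theta^*) = o(\sqrt{n}/\log(C))$.

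The main obstacle is verifying that the two probabilistic bounds above actually hold for the random Gaussian compressed design rather than a fixed design. This requires showing that $\*Z$ satisfies a compatibility / restricted-eigenvalue condition and that $\hat{\bm\Upsilon}$ concentrates around $\bm\Upsilon$ in sup norm with high probability; assumptions $(i)$ and $(ii)$ (bounded $1/\phi_{\min}(\bm\Upsilon)$ and bounded group variances) are exactly the population-level inputs that make these Gaussian concentration arguments go through, while $(iii)$ and $(iv)$ keep the signal and precision-matrix sparsities small enough for the resulting Lasso and nodewise-Lasso rates to be fast. Since the statement is a restatement of Theorem~2.2 of \citet{vandeGeer_Buhlmann_Ritov_Dezeure14} applied to \Cref{eq:noise_model_compressed}, I would ultimately invoke their analysis for the concentration steps, checking only that $(i)$--$(iv)$ correctly translate their hypotheses into the present compressed random-field setting.
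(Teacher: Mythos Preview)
The paper does not give its own proof of this statement: it is quoted verbatim as Theorem~2.2 of \citet{vandeGeer_Buhlmann_Ritov_Dezeure14} and used as a black box, so there is nothing in the paper to compare your argument against. Your sketch is a faithful outline of the original van de Geer--B\"uhlmann--Ritov--Dezeure proof (debiasing identity, conditional Gaussianity of the noise term, H\"older bound on the remainder combined with the Lasso $\ell_1$ rate and the nodewise KKT bound, with assumptions $(i)$--$(iv)$ supplying the population-level inputs for the random-design concentration), and as such it is correct at the level of a proof plan; the only thing to flag is that you should not present this as a self-contained proof, since the concentration and restricted-eigenvalue steps you defer to are themselves the substantive content of the cited theorem.
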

\begin{rk}\label{rk:noise_estimation}
  In \Cref{prop:Desparsified_Lasso}, to compute confidence intervals,
  the noise standard deviation $\sigma_{\eta}$ in the compressed problem
  has to be estimated.
  We refer the reader to the surveys
  that are dedicated to this subject
  such as \citet{reid2016, ndiaye2017, yu2019estimating}.
\end{rk}

As argued in \citet{vandeGeer_Buhlmann_Ritov_Dezeure14}, from
\Cref{prop:Desparsified_Lasso} we obtain asymptotic confidence
intervals for the $r$-th element of $\bm\theta^*$ from the following
equations, for all $z_1 \in \bbR$ and $z_2 \in \bbR^+$:
\begin{align}\label{eq:confint}
\begin{split}
& \bbP \left[ \frac{\sqrt{n}(\hat{\bm\theta}_c - \bm\theta_c^*)}
{\sigma_{\eta} \sqrt{\hat{\bm\Omega}_{c,c}}}
\leq z_1 ~\bigg|~ \*Z \right] - \Phi(z_1) = o_{\bbP}(1) \enspace , \\
& \bbP \left[ \frac{\sqrt{n}|\hat{\bm\theta}_c - \bm\theta_c^*|}
{\sigma_{\eta} \sqrt{\hat{\bm\Omega}_{c,c}}}
\leq z_2 ~\bigg|~ \*Z \right] - (2\Phi(z_2) - 1) = o_{\bbP}(1) \enspace , \\
\end{split}
\end{align}
where $\Phi(\cdot)$ is the cumulative distribution function of
the standard normal distribution.
Thus, for each $c \in [C]$ one can provide a p-value that assesses
whether or not $\bm\theta^*_c$ is equal to zero.
In the case of a two-sided single test, for each $c \in [C]$,
the p-value denoted by $\hat{p}^{\mathcal{G}}_c$ is:
\begin{align}\label{eq:p_value}
  \hat{p}^{\mathcal{G}}_c = 2 \left(1 -
  \Phi \left( \frac{\sqrt{n} |\hat{\bm\theta}_c|}
  {\sigma_{\eta} \sqrt{\hat{\bm\Omega}_{c,c}}} \right) \right)
  \enspace .
\end{align}
Under $H_{0}(G_c)$, from \Cref{eq:confint}, we have,
for any $\alpha \in (0, 1)$:
\begin{align}\label{eq:p_value_property_compressed_model}
\begin{split}
  \bbP(\hat{p}^{\mathcal{G}}_c \leq \alpha ~|~ \*Z)
  & = 1 - \bbP\left[ \frac{\sqrt{n} |\hat{\bm\theta}_c|}
  {\sigma_{\eta} \sqrt{\hat{\bm\Omega}_{c,c}}}
  \leq \Phi^{-1} \left( 1 - \frac{\alpha}{2} \right) ~\bigg|~ \*Z \right] \\
  & = \alpha + o_{\bbP}(1) \enspace .
\end{split}
\end{align}
Then, \Cref{eq:p_value_property_compressed_model} shows that the p-values
$\hat{p}^{\mathcal{G}}_c$ asymptotically control type 1 errors.
Using the Bonferroni correction, the family of corrected p-values
$\hat{q}^{\mathcal{G}} = (\hat{q}^{\mathcal{G}}_c)_{c \in [C]}$
remains defined by:
\begin{align}\label{eq:corrected_p_value_dl}
\hat{q}^{\mathcal{G}}_c = \min\{1, C \times {\hat{p}^{\mathcal{G}}_c}\}
\enspace .
\end{align}
Then, for all $\alpha \in (0, 1)$:
\begin{align}\label{eq:control_fwer_2_dl}
\mbox{FWER}_{\alpha}(\hat{q}^{\mathcal{G}}) =
\bbP(\min_{c \in N_{\mathcal{G}}}\hat{q}^{\mathcal{G}}_c \leq \alpha ~|~ \*Z)
\leq \alpha + o_{\bbP}(1) \enspace .
\end{align}
Then, \Cref{eq:control_fwer_2_dl} shows that the p-value family
$\hat{q}^{\mathcal{G}}$ asymptotically control FWER.
Finally, we have shown that desparsified Lasso
applied to a compressed version of the original problem
provides cluster-wise p-value families $\hat{p}^{\mathcal{G}}$
and $\hat{q}^{\mathcal{G}}$ that control respectively the type 1
error and the FWER in the compressed model only asymptotically.

\section{Relaxing the uncorrelated clusters assumption}
\label{sec:cdl_properties_corr}
%
As noted in \Cref{sec:compressed_representation_property},
assumption $(ii)$ of \Cref{prop:weights_compressed} is often
unmet in practice.
Here, taking the particular case in which the inference step is performed by
desparsified Lasso, we relax the assumption and show that it is still
possible to compute an adjusted corrected p-value that asymptotically controls
the $\delta$-FWER.
Hopefully, the technique used to derive this relaxation would also applicable
to other parametric statistical inference methods such as corrected ridge.
To better understand the development made in this section,
the adjusted p-values of this section should be compared with
the original p-values of \Cref{sec:stat_inf_dl}.
Note that, this extension is easy to integrate in the proof of the main results
\Cref{prop:ecdl_properties_final} as it just requires to use the
adjusted corrected p-value instead of the original corrected p-value.
Also, it does not provide much more insight about clustered inference
algorithms.
This is why we have decided to keep this extension for Supplementary Materials.

First, we replace \Cref{prop:weights_compressed} by the next proposition
that is a consequence of \citet[Proposition 4.4]{Buhlmann2012}.
\begin{prop}\label{prop:weights_compressed_2}
Considering the Gaussian linear model in \Cref{eq:noise_model_theory}
and assuming:
  \begin{enumerate}[leftmargin=*]
    \item[(i)] for all $c \in [C]$, for all  $j, k \in G^2_c, ~
    \Cov(\*X_{.,j}, \*X_{.,k} ~|~ \{\*Z_{.,c^\prime} : c^\prime \neq c \}) \geq 0$ \enspace ,
    \item[(ii.a)] {for all}~ $c \in [C], ~ \text{there exists} ~ \bm\nu_c \in \bbR^{+} \text{ s.t. }
    ~ \text{for all}~ j \in G_c, ~ \text{for all}~ k \notin G_c$~,
  \begin{align*}
  |\Cov(\*X_{.,j}, \*X_{.,k} ~|~ \{\*Z_{.,c^\prime} : c^\prime \neq c \})| \leq \bm\nu_c \enspace ,
  \end{align*}
    \item[(ii.b)] ~ {for all}~ $c \in [C], ~ \text{there exists} ~ \bm\tau_c > 0 \text{ s.t. }
    \Var(\*Z_{.,c}~|~\{\*Z_{.,c^\prime} : c^\prime \neq c \}) \geq \bm\tau_c$ \enspace ,
    \item[(iii)] ~ {for all}~ $c \in [C], ~
    \left(\text{for all}~ j \in G_c, \bm\beta^*_j \geq 0 \right) \orr
    \left(\text{for all}~ j \in G_c, \bm\beta^*_j \leq 0 \right)$ \enspace ,
  \end{enumerate}
then, in the compressed representation \Cref{eq:noise_model_compressed},
$\bm\theta^*$ admits the following decomposition:
\begin{align}\label{eq:weights_compressed_2}
  \bm\theta^* = \tilde{\bm\theta} + \bm\kappa \enspace ,
\end{align}
where, for all $c \in [C]$,
$|\bm\kappa_c| \leq (\bm\nu_c ~ / \bm\tau_c) \normin{\bm\beta^*}_{1}$
and $\tilde{\bm\theta}_c \neq 0$ if and only if
there exists $j \in G_c$ such that $\bm\beta_j^* \neq 0$.
If such an index $j$ exists then
$\sign(\tilde{\bm\theta}_c) = \sign(\bm\beta_j^*)$.
\end{prop}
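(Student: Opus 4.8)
The plan is to read off $\bm\theta^*$ coordinate-by-coordinate from the partial-regression (Frisch--Waugh--Lovell) formula, after identifying $\bm\theta^*$ with the population least-squares coefficient of $\*y$ on the group representatives $\*Z$. First I would note that, since $\bm\eta$ in \Cref{eq:noise_model_compressed} is centered and independent of $\*Z$, the orthogonality relations force $\bm\Upsilon\bm\theta^* = \*A^\top\bm\Sigma\bm\beta^*$, that is $\bm\theta^* = \bm\Upsilon^{-1}\*A^\top\bm\Sigma\bm\beta^*$, where $\bm\Upsilon = \*A^\top\bm\Sigma\*A$ is invertible thanks to assumption (ii.b). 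Fixing $c \in [C]$, let $\tilde{\*Z}_{.,c}$ be the residual of the column $\*Z_{.,c}$ after linear regression on the other columns $\{\*Z_{.,c^\prime} : c^\prime \neq c\}$; by Gaussianity $\Var(\tilde{\*Z}_{.,c}) = \Var(\*Z_{.,c} \mid \{\*Z_{.,c^\prime} : c^\prime \neq c\}) \geq \bm\tau_c > 0$. The partial-regression formula then gives
\begin{align*}
\bm\theta^*_c = \frac{\Cov(\tilde{\*Z}_{.,c}, \*y)}{\Var(\tilde{\*Z}_{.,c})} \enspace .
\end{align*}

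Next I would substitute $\*y = \*X\bm\beta^* + \bm\varepsilon$ and use the independence of $\bm\varepsilon$ and $\*X$ to expand the numerator as $\sum_{j \in G_c}\bm\beta^*_j\,\Cov(\tilde{\*Z}_{.,c}, \*X_{.,j}) + \sum_{j \notin G_c}\bm\beta^*_j\,\Cov(\tilde{\*Z}_{.,c}, \*X_{.,j})$. Dividing each sum by $\Var(\tilde{\*Z}_{.,c})$ defines $\tilde{\bm\theta}_c$ (the within-cluster part) and $\bm\kappa_c$ (the cross-cluster part), so that $\bm\theta^*_c = \tilde{\bm\theta}_c + \bm\kappa_c$, which is the decomposition \Cref{eq:weights_compressed_2}. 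The key simplification is that orthogonal projection is linear and $\*Z_{.,c} = \alpha_c \sum_{j \in G_c}\*X_{.,j}$ with $\alpha_c = 1/|G_c|$, so $\tilde{\*Z}_{.,c} = \alpha_c \sum_{j \in G_c}\tilde{\*X}_{.,j}$, where $\tilde{\*X}_{.,j}$ is the residual of $\*X_{.,j}$ on $\{\*Z_{.,c^\prime} : c^\prime \neq c\}$; and, since $\tilde{\*Z}_{.,c}$ is orthogonal to that subspace, $\Cov(\tilde{\*Z}_{.,c}, \*X_{.,j}) = \alpha_c \sum_{j^\prime \in G_c}\Cov(\*X_{.,j^\prime}, \*X_{.,j} \mid \{\*Z_{.,c^\prime} : c^\prime \neq c\})$ for every $j \in [p]$, a plain sum of conditional covariances.

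From here the two claims follow. For the sign and support statement: by (i), each conditional covariance in the last identity with $j, j^\prime \in G_c$ is nonnegative, so for $j \in G_c$ we get $\Cov(\tilde{\*Z}_{.,c}, \*X_{.,j}) \geq \alpha_c \Var(\*X_{.,j} \mid \{\*Z_{.,c^\prime} : c^\prime \neq c\}) > 0$ (the strict inequality by non-degeneracy of $\bm\Sigma$). By (iii) all $\bm\beta^*_j$ with $j \in G_c$ have a common sign, so the numerator $\sum_{j \in G_c}\bm\beta^*_j\,\Cov(\tilde{\*Z}_{.,c}, \*X_{.,j})$ of $\tilde{\bm\theta}_c$ is a sum of terms of that common sign; it is zero exactly when every $\bm\beta^*_j$, $j \in G_c$, is zero, and otherwise has the sign of any nonzero $\bm\beta^*_{j_0}$, $j_0 \in G_c$. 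Dividing by $\Var(\tilde{\*Z}_{.,c}) > 0$ gives $\tilde{\bm\theta}_c \neq 0$ if and only if there is $j \in G_c$ with $\bm\beta^*_j \neq 0$, and then $\sign(\tilde{\bm\theta}_c) = \sign(\bm\beta^*_j)$. For the bound on $\bm\kappa_c$: for $j \notin G_c$, (ii.a) bounds each conditional covariance by $\bm\nu_c$, so $|\Cov(\tilde{\*Z}_{.,c}, \*X_{.,j})| \leq \alpha_c |G_c| \bm\nu_c = \bm\nu_c$, whence $|\bm\kappa_c| \leq \bm\nu_c \sum_{j \notin G_c}|\bm\beta^*_j| / \Var(\tilde{\*Z}_{.,c}) \leq (\bm\nu_c / \bm\tau_c)\normin{\bm\beta^*}_1$. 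This is exactly the route of \citet[Proposition 4.4]{Buhlmann2012}.

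I expect the main obstacle to be the forward direction of the support equivalence: assumption (i) only provides nonnegativity, so I must rule out exact cancellation inside the within-cluster sum, which requires the diagonal term $\Var(\*X_{.,j} \mid \{\*Z_{.,c^\prime} : c^\prime \neq c\})$ to be strictly positive. This is a non-degeneracy condition --- it holds whenever $\bm\Sigma$ is positive definite, equivalently whenever no covariate is almost surely an exact linear combination of covariates belonging to other clusters --- and is implicit in the Gaussian-design setting considered here. The remaining ingredients, namely the partial-regression identity and the reduction of the covariances against $\*X_{.,j}$ to conditional covariances against $\{\*Z_{.,c^\prime} : c^\prime \neq c\}$, are routine Gaussian linear algebra that I would spell out in the full proof.
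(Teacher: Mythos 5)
Your proof is correct and follows essentially the same route as the paper: the paper simply invokes \citet[Proposition 4.4]{Buhlmann2012} for the decomposition $\bm\theta^*_c = |G_c|\sum_{j\in G_c}w^\prime_j\bm\beta^*_j + \bm\kappa_c$ with the stated bound on $\bm\kappa_c$, and then reads off the sign and support claims from assumptions $(i)$ and $(iii)$ exactly as you do, whereas you rederive that formula explicitly via the population Frisch--Waugh--Lovell identity and the reduction of $\Cov(\tilde{\*Z}_{.,c},\*X_{.,j})$ to conditional covariances. Your remark that the forward direction of the support equivalence needs the diagonal term $\Var(\*X_{.,j}\mid\{\*Z_{.,c^\prime}:c^\prime\neq c\})$ to be strictly positive is well taken --- the paper leaves this non-degeneracy condition implicit (as it does in the proof of \Cref{prop:weights_compressed}, where $w_j>0$ is asserted from $(i)$ alone), and it holds under positive definiteness of $\bm\Sigma$.
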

\begin{proof}
See \Cref{sec:proof_weights_compressed}.
\end{proof}
The assumptions $(i)$ and $(ii)$ in \Cref{prop:weights_compressed}
are replaced by $(i)$, $(ii.a)$ and $(ii.b)$
in \Cref{prop:weights_compressed_2}.
More precisely, instead of assuming that the covariates
inside a group are positively correlated, we assume that they are positively
correlated conditionally to all other groups.
Also, we relax the more questionable assumption of groups independence;
we assume instead that the conditional covariance of two covariates of
different groups is bounded above $(ii.a)$ and that the conditional variance
of the group representative variable is non-zero $(ii.b)$.
In practice, except when group representative variables are linearly dependent,
we can always find values for which $(ii.a)$ and $(ii.b)$ are verified,
but we would like the upper bound of $(ii.a)$ as low as possible and
the lower bound of $(ii.b)$ as high as possible.
Finally, assumption $(iii)$ remains unchanged.

Then, as done in \Cref{sec:stat_inf_dl}, we can build $\hat{\bm\theta}$.
Under the same assumptions, \Cref{prop:Desparsified_Lasso} is still valid
and $\hat{\bm\theta}$ still verifies \Cref{eq:confint}.
However, here we want to estimate $\tilde{\bm\theta}$, not $\bm\theta^*$.
Combining \Cref{prop:Desparsified_Lasso} and \Cref{prop:weights_compressed_2},
we can see $\hat{\bm\theta}$ as a biased estimator of $\tilde{\bm\theta}$.
To take this bias into account, we need to adjust the definition of the
p-values given by \Cref{eq:p_value}.
Let us assume that, for a given $a \in \bbR^{+}$,
\begin{align}\label{eq:p_value_assumption}
\max_{c \in [C]} \left( \frac{\bm\nu_c} {\bm\tau_c \sqrt{\hat{\bm\Omega}_{c,c}}} \right)
\leq \frac{a \, \sigma_{\varepsilon}} {\norm{\bm\beta^*}_{1}}
\enspace .
\end{align}
And, for all $c \in [C]$, let us define the adjusted p-values:
\begin{align}\label{eq:p_value_2}
\hat{p}^{\mathcal{G}}_c =
  2 \left( 1 - \Phi
    \left(
    \sqrt{n} \left[
    \frac{|\hat{\bm\theta}_c|}{\sigma_{\eta} \sqrt{\hat{\bm\Omega}_{c,c}}} - a
    \right]_{+}
    \right)
  \right)
\enspace .
\end{align}
Let us denote by $q_{1 - \frac{\alpha}{2}}=\Phi^{-1} (1 - \frac{\alpha}{2})$
the $1 - \frac{\alpha}{2}$ quantile of the standard Gaussian distribution.
Then, under $H_{0}(G_c)$, the hypothesis which states that
$\bm\beta^*_j = 0$ for $j \in G_c$ implying that $\tilde{\bm\theta}_c = 0$,
we have, for any $\alpha \in (0, 1)$:

\begin{align}\label{eq:adjusted_p_value_property_compressed_model}
  \begin{split}
    \bbP(\hat{p}^{\mathcal{G}}_c \leq \alpha ~|~ \*Z)
    & = 1 - \bbP\left[ \sqrt{n} \left[
    \frac{|\hat{\bm\theta}_c|}{\sigma_{\eta} \sqrt{\hat{\bm\Omega}_{c,c}}} - a
                                \right]_{+}
    \leq q_{1 - \frac{\alpha}{2}} ~\bigg|~ \*Z \right] \\
    & \leq 1 - \bbP\left[ \sqrt{n} \left[
    \frac{|\hat{\bm\theta}_c|}{\sigma_{\eta} \sqrt{\hat{\bm\Omega}_{c,c}}} -
    \frac{\bm\nu_c \norm{\bm\beta^*}_{1}}
        {\sigma_{\varepsilon} \bm\tau_c \sqrt{\hat{\bm\Omega}_{c,c}}}
                                \right]_{+}
    \leq q_{1 - \frac{\alpha}{2}} ~\bigg|~ \*Z \right] \\
    & \leq 1 - \bbP\left[ \sqrt{n} \left[
    \frac{|\hat{\bm\theta}_c| - |\bm\kappa_c|}
    {\sigma_{\eta} \sqrt{\hat{\bm\Omega}_{c,c}}}
                                \right]_{+}
    \leq q_{1 - \frac{\alpha}{2}} ~\bigg|~ \*Z \right] \\
    & = 1 - \bbP\left[ \sqrt{n} \left[
    \frac{|\hat{\bm\theta}_c| - |\bm\theta^{*}_c|}
    {\sigma_{\eta} \sqrt{\hat{\bm\Omega}_{c,c}}}
                                \right]_{+}
    \leq q_{1 - \frac{\alpha}{2}} ~\bigg|~ \*Z \right] \\
    & \leq 1 - \bbP\left[ \sqrt{n}
    \frac{|\hat{\bm\theta}_c - \bm\theta^{*}_c|}
    {\sigma_{\eta} \sqrt{\hat{\bm\Omega}_{c,c}}}
    \leq q_{1 - \frac{\alpha}{2}} ~\bigg|~ \*Z \right] \\
    & = \alpha + o_{\bbP}(1) \enspace .
  \end{split}
\end{align}
Finally, we have built a cluster-wise adjusted p-value family that
asymptotically exhibits, with low probability ($< \alpha$),
low value ($< \alpha$) for the clusters
which contain only zero weight covariates.
To complete the proof in the case of correlated clusters,
one can proceed as in uncorrelated cluster case
taking \Cref{eq:p_value_2} instead of \Cref{eq:p_value}.

Now, let us come back to the interpretation and choice for the constant $a$.
In \Cref{prop:weights_compressed_2}, we have shown that, when groups
are not independent, a group weight in the compressed model can be
non-zero even if the group only contains zero weight covariates.
However, the absolute value of the weight of such a group is necessarily upper
bounded.
We thus introduce $a \in \bbR^{+}$ in \Cref{eq:p_value_2} to
increase the p-values by a relevant amount and keep
statistical guarantees concerning the non-discovery of a such group.
The value of $a$ depends on the physics of the problem
and on the choice of clustering.
While the physics of the problem is fixed, the choice of clustering
has a strong impact on the left term of \Cref{eq:p_value_assumption} and
a "good" choice of clustering results in a lower $a$ (less correction).
To estimate $a$, we need to find an upper bound of $\norm{\bm\beta^*}_{1}$,
a lower bound of $\sigma_{\varepsilon}$ and to estimate the left term of
\Cref{eq:p_value_assumption}.
In practice, to compute p-values, we took $a = 0$ since
the formula in \Cref{eq:p_value} was already conservative
for all the problems we considered.

\section{EnCluDL}
\label{sec:algorithm_ECDL_journal}
%
\begin{figure}[!ht]
    \centering
    \includegraphics[width=1.0\linewidth]{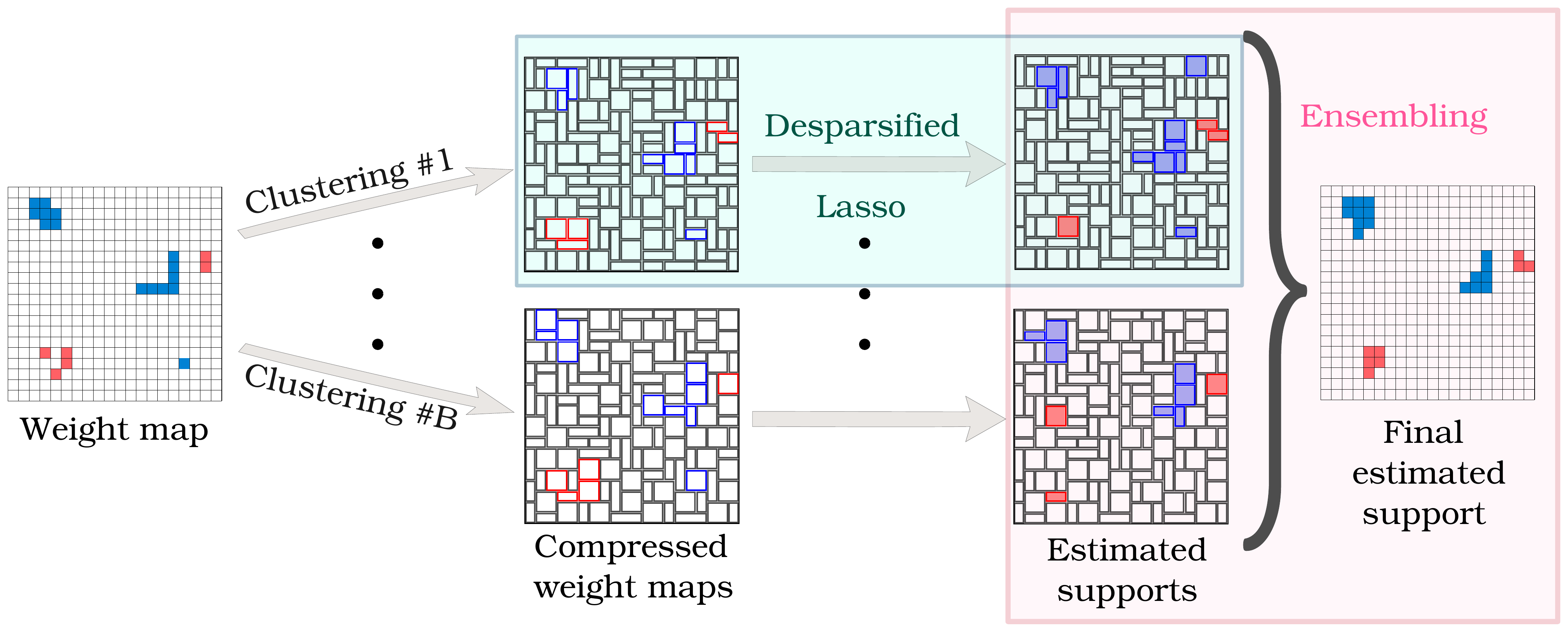}
    \caption{Summary of the mechanism of ensemble of clustered
             desparsified Lasso (EnCluDL).
             EnCluDL combines three algorithmic steps: a clustering
             procedure, the desparsified Lasso statistical inference procedure
             to derive p-value maps, and an ensembling method that
             synthesizes several p-value maps into one.
    }
    \label{fig:ECDL}
\end{figure}

Computationally, to derive the EnCluDL solution we must solve
$B$ independent CluDL problems, making the global problem
embarrassingly parallel; nevertheless, we could run the CluDL
algorithm on standard desktop stations without parallelization
with $n = 400$, $p \approx 10^5$, $C = 500$ and $B = 25$ in
less than $10$ minutes.
Note that, the clustering step being much quicker than the inference
step, $p$ has a very limited impact on the total computation time.

The complexity for solving the Lasso depends significantly on the
choice of solver, we then give the complexity in numbers of Lasso.
The complexity for solving EnCluDL is given by the complexity of the
resolution of $\mathcal{O}(B \times C)$ Lasso problems with $n$ samples and
$C$ covariates, \ie with clustering.
It is noteworthy that the complexity for solving the desparsified Lasso on the
original problem is given by the complexity of the resolution of
$\mathcal{O}(p)$ Lasso problems with $n$ samples and $p$ covariates, \ie
without clustering.
Then, EnCluDL should be much faster than the
desparsified Lasso whenever $p \gg C$.

\section{Complementary simulation results}
\label{sec:additional_figures}
%
\begin{figure}[!ht]
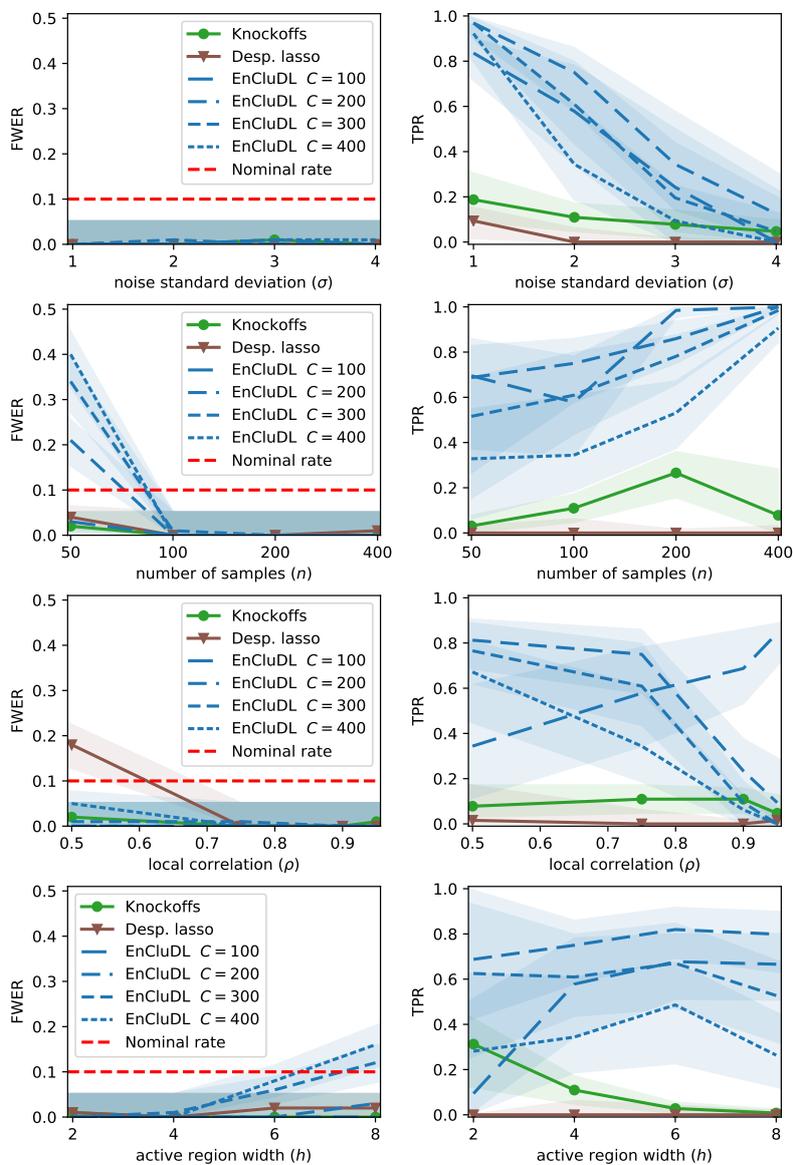

    \centering
    \begin{minipage}{0.35\linewidth}
      \centering\includegraphics[width=\linewidth]{fwer_sigma.pdf}
    \end{minipage}
    \begin{minipage}{0.35\linewidth}
      \centering\includegraphics[width=\linewidth]{tpr_sigma.pdf}
    \end{minipage}
    \begin{minipage}{0.35\linewidth}
      \centering\includegraphics[width=\linewidth]{fwer_samples.pdf}
    \end{minipage}
    \begin{minipage}{0.35\linewidth}
      \centering\includegraphics[width=\linewidth]{tpr_samples.pdf}
    \end{minipage}
    \begin{minipage}{0.35\linewidth}
      \centering\includegraphics[width=\linewidth]{fwer_smoothing.pdf}
    \end{minipage}
    \begin{minipage}{0.35\linewidth}
      \centering\includegraphics[width=\linewidth]{tpr_smoothing.pdf}
    \end{minipage}
    \begin{minipage}{0.35\linewidth}
      \centering\includegraphics[width=\linewidth]{fwer_roi_size.pdf}
    \end{minipage}
    \begin{minipage}{0.35\linewidth}
      \centering\includegraphics[width=\linewidth]{tpr_roi_size.pdf}
    \end{minipage}
    \caption{
    Results for various simulation parameters.
    The green line with circles correspond to the knockoffs,
    the brown line with triangle is the desparsified lasso,
    the dashed blue lines are for EnCluDL with length of the dashes
    increasing when $C$ diminishes: large dashes are for $C = 100$,
    medium for $C = 200$, small for $C = 300$, tiny for $C = 400$.
    We compute the same FWER and TPR quantities as in
    \Cref{fig:central_scenario}, and the same $80\%$ confidence intervals:
    by Binomial approximation for the FWER and taking first and last
    deciles for the TPR.
    }
    \label{fig:all_scenarios}
\end{figure}

In \Cref{fig:all_scenarios}, we study the influence of the simulation
parameters by varying one parameter of the central scenario at a time.
We vary the noise standard deviation, the number of samples, the
local correlation and the size of the support.
For a better readability of the figures, we do not analyze the results
of CLuDL since it is expected to be always a bit less powerful than
EnCluDL while showing a similar behavior.
First, we look at the plots where we vary the noise standard deviation
$\sigma$.
We observe that the methods reach the targeted FWER control and notice that
EnCluDL benefits more strongly from the decrease of
$\sigma$ regarding support recovery.
Second, we analyze the results for various sample sizes ($n$) values.
Concerning EnCluDL, we notice that the $\delta$-FWER is not controlled
when $n = 50$ except for $C = 100$.
This is not surprising since the $\delta$-FWER control is asymptotic and
$n = 50$ is not sufficient.
In terms of support recovery, the problem gets easier with larger $n$, but
only EnCluDL benefits strongly from an increase of $n$.
Third, we investigate the influence of the level of correlation between
neighboring covariates ($\rho$).
Regarding FWER control, desparsified lasso does not control the FWER when
$\rho = 0.5$.
Regarding the statistical power of EnCluDL, as one would expect,
when the spatial structure is strong \ie $\rho > 0.9$, it is relevant
to pick larger clusters, \ie to take a smaller $C$.
Indeed, to make a relevant choice for $C$, data structure
has to be taken into account to derive good covariates' clustering;
this is true up to the limit given by the desired spatial tolerance.
A good clustering creates clusters that are weakly correlated and
contains covariates that are highly correlated.
This observation is linked to assumption ($ii$) of
\Cref{prop:weights_compressed} or to assumption ($ii.a$) and ($ii.b$)
of \Cref{prop:weights_compressed_2}.
Finally, we consider the results for different support sizes coded by
the active region width $h$.
Sparsity is a crucial assumption for desparsified lasso and
then for EnCluDL.
Also, when $p$ (or $C$) increases the required sparsity is greater.
This explains why when $h = 8$ and $C \geq 300$, the empirical
$\delta$-FWER is slightly above the expected nominal rate.
Regarding the statistical power of EnCluDL, as one could expect,
when the active regions are large, it is relevant to use large clusters.
However, it can be difficult to estimate this parameter in advance, thus
we prefer to consider desired spatial tolerance parameter $\delta$ and
data structure to set $C$.

\section{Proofs}
\label{sec:proofs_ECDL_journal}

\subsection{Proof of \Cref{prop:weights_compressed} and \Cref{prop:weights_compressed_2}}
\label{sec:proof_weights_compressed}

First, we start by the proof of \Cref{prop:weights_compressed} which
is derived from \citet[Proposition 4.3]{Buhlmann2012}:

\begin{proof}

With assumption $(ii)$ and \citet[Proposition 4.3]{Buhlmann2012},
we have, for all $c \in [C]$:
\begin{align*}
  \bm\theta_c^* = |G_c| \sum_{j \in G_c} w_j \bm\beta^*_j \enspace ,
\end{align*}
where, for all $j\in G_c$:
\begin{align*}
  w_j = \frac{\sum_{k \in G_c} \bm\Sigma_{j,k}}
  {\sum_{k \in G_c}\sum_{k^\prime \in G_c} \bm\Sigma_{k,k^\prime}} \enspace .
\end{align*}
From assumption $(i)$, we have $w_j > 0$ for all $j \in G_c$.
Assumption $(iii)$ ensures that, for all $j \in G_c$, the $\bm\beta^*_j$
have the same sign.
Then, $\bm\theta_c^*$ is of the same sign as the $\bm\beta^*_j$
and is non-zero only if there exists $j \in G_c$ such that
$\bm\beta^*_j \neq 0$.
\end{proof}

Now, we give the proof of \Cref{prop:weights_compressed_2} which
is mainly derived from \citet[Proposition 4.4]{Buhlmann2012}:

\begin{proof}
With assumption $(ii.a)$ and $(ii.b)$ and
\citet[Proposition 4.4]{Buhlmann2012},
we have, for all $c \in [C]$:
\begin{align*}
  \bm\theta_c^* = |G_c| \sum_{j \in G_c} w^{\prime}_j \bm\beta^*_j
                + \bm\kappa_c \enspace ,
\end{align*}
where
\begin{align*}
  w^{\prime}_j =
  \frac{\sum_{k \in G_c} \Cov(\*X_{.,j}, \*X_{.,k}~|~\{\*Z_{.,c^\prime} : c^\prime \neq c \})}
       {\sum_{k \in G_c} \sum_{k^\prime \in G_c}
        \Cov(\*X_{.,k}, \*X_{.,k^\prime}~|~\{\*Z_{.,c^\prime} : c^\prime \neq c \})}
  \enspace ,
\end{align*}
and, for all $c \in [C]$
\begin{align*}
  |\bm\kappa_c| \leq (\bm\nu_c ~ / \bm\tau_c) \normin{\bm\beta^*}_{1}
  \enspace .
\end{align*}
Let us define $\tilde{\bm\theta}$ by
\begin{align*}
  \tilde{\bm\theta}_c = |G_c| \sum_{j \in G_c} w^{\prime}_j \bm\beta^*_j
                      \enspace .
\end{align*}
Then,
\begin{align*}
  \bm\theta^* = \tilde{\bm\theta} + \bm\kappa \enspace ,
\end{align*}
And, similarly as in the proof of \Cref{prop:weights_compressed},
from assumption $(i)$ and $(iii)$, $\tilde{\bm\theta}_c$ is of the
same sign as the $\bm\beta^*_j$ for $j \in G_c$ and is non-zero only
if there exists $j \in G_c$ such that $\bm\beta^*_j \neq 0$.
\end{proof}

\subsection{Proof of \Cref{prop:p_values_control}}
\label{sec:proof_degrouping}
%
Before going trough the proof of \Cref{prop:p_values_control},
we introduce the grouping function $g$ that matches the
covariate index to its corresponding group index:
\begin{align*}
\begin{split}
g: [p] &\to [C] \\
~j~ & \mapsto ~c~ \quad \si  ~ j \in G_c  \enspace . \\
\end{split}
\end{align*}
Then, \Cref{eq:de_grouping_p_values} can be rewritten as follows:
\begin{align}\label{eq:de_grouping_p_values_2}
\begin{split}
& \text{for all}~ j \in [p], \quad \hat{p}_j =
\hat{p}^{\mathcal{G}}_{g(j)} \enspace , \\
& \text{for all}~ j \in [p], \quad \hat{q}_j =
\hat{q}^{\mathcal{G}}_{g(j)}\enspace .\\
\end{split}
\end{align}

\begin{proof}
\textit{(i)}
Suppose that we are under $H^{\delta}_0(j)$.
Since the cluster diameters are all smaller than $\delta$, all
the covariates in $G_{g(j)}$ have a corresponding weight equal to zero.
Thus, using \Cref{prop:weights_compressed}, we have
$\bm\theta_{g(j)}^* = 0$, \ie we are under $H_0(G_{g(j)})$.
Under this last null-hypothesis, using
\Cref{eq:p_value_property_compressed_model} and
\Cref{eq:de_grouping_p_values_2}, we have:
\begin{align*}
\text{for all}~ \alpha \in (0, 1),
~ \bbP(\hat{p}_{g(j)}^{\mathcal{G}} \leq \alpha) =
\bbP(\hat{p}_{j} \leq \alpha) = \alpha \enspace .
\end{align*}
This last result being true for any $j \in N^{\delta}$, we have
shown that the elements of the family $\hat{p}$
control the $\delta$-type 1 error.

\medskip

\textit{(ii)}
As mentioned in \Cref{sec:stat_inf}, we know that, the family
$\hat{q}^{\mathcal{G}}$ controls the FWER,
\ie for $\alpha \in (0,1)$ we have
$\bbP (\min_{c \in N_{\mathcal{G}}} \hat{q}^{\mathcal{G}}_c \leq \alpha)
\leq \alpha$.
Let us denote by $g^{-1}(N_{\mathcal{G}})$ the set of indexes of
covariates that belong to the groups of $N_{\mathcal{G}}$, \ie
$g^{-1}(N_{\mathcal{G}}) =
\left\{j \in [p] : {g(j)} \in N_{\mathcal{G}} \right\}$.
Again, given that all the cluster diameters
are smaller than $\delta$ and using
\Cref{prop:weights_compressed}, if $j \in N^{\delta}$ then
$g(j) \in N_{\mathcal{G}}$.
That is to say $N^{\delta} \subset g^{-1}(N_{\mathcal{G}})$.
Then, we have:
\begin{align*}
  \min_{j \in N^{\delta}} (\hat{q}_j) \geq
  \min_{j \in g^{-1}(N_{\mathcal{G}})} (\hat{q}_j) \enspace .
\end{align*}
We can also notice that:
\begin{align*}
\begin{split}
  \min_{j \in g^{-1}(N_{\mathcal{G}})} (\hat{q}_j)
  & = \min_{j \in g^{-1}(N_{\mathcal{G}})} (\hat{q}^{\mathcal{G}}_{g(j)}) \\
  & = \min_{g(j) \in N_{\mathcal{G}}} (\hat{q}^{\mathcal{G}}_{g(j)})
  \enspace .\\
\end{split}
\end{align*}
Replacing $g(j) \in [C]$ by $c \in [C]$, and using
\Cref{eq:control_fwer_2}, we obtain:
\begin{align*}
  \text{for all}~ \alpha \in (0,1),
  ~ \bbP (\min_{j \in N^{\delta}} (\hat{q}_j) \leq \alpha) \leq
  \bbP (\min_{c \in N_{\mathcal{G}}}\hat{q}^{\mathcal{G}}_c \leq \alpha)
  \leq \alpha \enspace .
\end{align*}
This last result states that the family $(\hat{q}_j)_{j \in [p]}$
controls the $\delta$-FWER.
\end{proof}

\subsection{Proof of \Cref{prop:p_values_aggregation}}
\label{sec:proof_aggregation}
%
The proof of \Cref{prop:p_values_aggregation} is inspired
by the one proposed by \citet{Meinshausen2008}.
However, it is subtly different since we can not remove the term
$\min_{j \in N^{\delta}}$ and have to work with it to obtained
the desired inequality.
First, we start by making a short remark about the $\gamma$-quantile quantity.
\begin{df}[empirical $\gamma\mbox{-quantile}$]\label{df:gamma-quantile}
For a set $V$  of real numbers and $\gamma \in (0, 1)$, let
\begin{align}\label{eq:gamma-quantile}
  \gamma\mbox{-quantile}(V) = \min \left\{ v \in V :
  \frac{1}{|V|} \sum_{w \in V} \mathds{1}_{w \leq v} \geq \gamma \right\}
  \enspace .
\end{align}
\end{df}
\begin{rk}\label{prop:gamma-quantile}
For a set of real number $V$ and for $a \in \bbR$, let us define
the quantity $\pi(a, V)$ by the following:
\begin{align}\label{eq:pi}
  \pi(a, V) = \frac{1}{|V|} \sum_{v \in V} \mathds{1}\left(v \leq a \right)
\end{align}
Then, for $\gamma \in (0, 1)$, the two events
$E_1 = \{ \pi(a, V) \geq \gamma \}$ and
$E_2 = \{ \gamma\mbox{-quantile}(V) \leq a \}$ are identical.
\end{rk}
Now, we give the proof of \Cref{prop:p_values_aggregation}.
\begin{proof}
First, one can notice that, from \Cref{eq:p_values_aggregation}, we have:
\begin{align*}
    \min_{j \in N^{\delta}} (\tilde{q}_j(\gamma))
    \geq \min \left\{ 1, \gamma\mbox{-quantile}
    \left( \left\{ \min_{j \in N^{\delta}}
    \left(\frac{\hat{q}^{(b)}_j}{\gamma} \right) :
    b \in [B] \right\} \right) \right\} \enspace .
\end{align*}
Then, for $\alpha \in (0, 1)$:
\begin{align*}
  \begin{split}
    \bbP\left(\min_{j \in N^{\delta}} (\tilde{q}_j(\gamma)) \leq \alpha \right)
    & \leq \bbP\left(\min \left\{ 1, \gamma\mbox{-quantile}
    \left( \left\{ \min_{j \in N^{\delta}}
    \left(\frac{\hat{q}^{(b)}_j}{\gamma} \right) :
    b \in [B] \right\}\right)\right\} \leq \alpha \right) \\
    & = \bbP\left(\gamma\mbox{-quantile}
    \left( \left\{ \min_{j \in N^{\delta}}
    \left(\frac{\hat{q}^{(b)}_j}{\gamma} \right) :
    b \in [B] \right\} \right) \leq \alpha \right) \enspace . \\
  \end{split}
\end{align*}
Using \Cref{prop:gamma-quantile}, for $\gamma \in (0,1)$, with:
\begin{align*}
V = \left\{ \min_{j \in N^{\delta}}
\left( \frac{\hat{q}^{(b)}_j}{\gamma} \right) :
b \in [B] \right\}
\quad \andd \quad
a = \alpha
\enspace ,
\end{align*}
and noticing that:
\begin{align*}
  \pi\left(\alpha, \left\{ \min_{j \in N^{\delta}}
  \left( \frac{\hat{q}^{(b)}_j}{\gamma} \right) :
  b \in [B] \right\}\right)
  =
  \frac{1}{B} \sum^{B}_{b = 1} \mathds{1}
  \left\{\min_{j \in N^{\delta}}({\hat{q}^{(b)}_j})
  \leq \alpha \gamma \right\} \enspace ,
\end{align*}
then, we have:
\begin{align*}
\bbP \left( \gamma\mbox{-quantile}
\left( \left\{ \min_{j \in N^{\delta}}
\left(\frac{\hat{q}^{(b)}_j}{\gamma} \right) :
b \in [B] \right\}\right) \leq \alpha \right)
=
\bbP \left( \frac{1}{B} \sum^{B}_{b = 1} \mathds{1}
\left\{\min_{j \in N^{\delta}}({\hat{q}^{(b)}_j}) \leq \alpha \gamma \right\}
\geq \gamma \right) .
\end{align*}
Then, the Markov inequality gives:
\begin{align*}
\bbP \left( \frac{1}{B} \sum^{B}_{b = 1} \mathds{1}
\left\{\min_{j \in N^{\delta}}({\hat{q}^{(b)}_j}) \leq \alpha \gamma \right\}
\geq \gamma \right)
\leq
\frac{1}{\gamma} \bbE \left[ \frac{1}{B} \sum^{B}_{b = 1} \mathds{1}
\left\{\min_{j \in N^{\delta}}({\hat{q}^{(b)}_j}) \leq \alpha \gamma \right\}
\right] \enspace .
\end{align*}
Then, using the assumption that the $B$ families
$(\hat{q}^{(b)}_j)_{j \in [p]}$ control of the
$\delta$-FWER (last inequality), we have:
\begin{align*}
\frac{1}{\gamma} \bbE \left[ \frac{1}{B} \sum^{B}_{b = 1} \mathds{1}
\left\{\min_{j \in N^{\delta}}({\hat{q}^{(b)}_j}) \leq \alpha \gamma \right\}
\right]
=
\frac{1}{\gamma} \frac{1}{B} \sum^{B}_{b = 1}
\bbP \left(\min_{j \in N^{\delta}}(\hat{q}^{(b)}_j) \leq \alpha \gamma \right)
\leq \alpha \enspace .
\end{align*}
Finally, we have shown that, for $\alpha \in (0, 1)$:
\begin{align*}
  \bbP\left(\min_{j \in N^{\delta}} (\tilde{q}_j(\gamma)) \leq \alpha \right)
  \leq \alpha \enspace .
\end{align*}
This establishes that the family $(\tilde{q}_j(\gamma))_{j \in [p]}$
controls the $\delta$-FWER.
\end{proof}
%
\subsection{Proof of \Cref{prop:ecdl_properties_final}}
\label{sec:proof_ecdl_properties_final}
%
To show \Cref{prop:ecdl_properties_final}, we connect the previous results:
\Cref{prop:weights_compressed}, \Cref{prop:p_values_control}
and \Cref{prop:p_values_aggregation}.
First, we prove that clustered inference algorithms produce
a p-value family that controls the $\delta$-FWER.
\begin{proof}
Assuming the noise model \Cref{eq:noise_model_theory}, assuming that
\Cref{ass:assumption_1} and \Cref{ass:assumption_2} are verified for a
distance parameter larger than $\delta$ and that the clustering
diameter is smaller than $\delta$, then we directly obtain the
assumption $(i)$ and $(iii)$ of \Cref{prop:weights_compressed}.
This means that the compressed representation has the correct pattern
of non-zero coefficients, in particular it does not include in the
support clusters of null-only covariates.
Additionally, if one is able to perform a valid statistical
inference on the compressed model \Cref{eq:noise_model_compressed},
\ie to produce cluster-wise p-values such that
\Cref{eq:p_value_type_1_error_control} holds,
then \Cref{prop:p_values_control} ensures that the p-value
family constructed using the de-grouping method presented in
\Cref{eq:de_grouping_p_values} controls the $\delta$-FWER.
\end{proof}

Now, we prove that ensembled clustered inference algorithms
produce a p-value family that controls the $\delta$-FWER.
\begin{proof}
  Given the above arguments, the p-value families produced by
  clustered inference algorithms subject to all clusterings
  fulfilling the theorem hypotheses control the $\delta$-FWER.
  Then, using the aggregation method given by \Cref{eq:p_values_aggregation},
  we know from \Cref{prop:p_values_aggregation} that the aggregated p-value
  family also controls the $\delta$-FWER.
\end{proof}
\end{appendices}
\end{document}